\documentclass{tlp}

\newif\ifappendix
\appendixtrue
\ifdefined\appendixoff\appendixfalse\fi

\usepackage{amsmath}
\usepackage{amssymb}
\usepackage{amsfonts}
\usepackage{graphicx}
\usepackage{multirow}
\usepackage{url}
\usepackage{listings}
\usepackage{xcolor}
\usepackage{xspace}
\usepackage{mathtools}
\usepackage{hyperref}
\usepackage{subcaption}
\usepackage{tikz}
\usetikzlibrary{positioning,arrows.meta}
\usepackage{wrapfig}

\lstdefinestyle{logica}{
  basicstyle=\ttfamily\small,
  breaklines=true,
  keywords={Min=, +=, =},
  keywordstyle=\bf,
  comment=[l]{\#},
  commentstyle=\color{gray}\itshape,
}

\newtheorem{theorem}{Theorem}
\newtheorem{lemma}{Lemma}
\newtheorem{definition}{Definition}
\newtheorem{observation}{Observation}

\newtheorem{corollary}{Corollary}

\makeatletter
\def\@begintheorem#1#2{\intheoremtrue\normalfont\rmfamily\trivlist
  \pagebreak[3]\item[\hskip\labelsep{\normalfont\bfseries #1\ #2.}]}
\def\@opargbegintheorem#1#2#3{\intheoremtrue\normalfont\rmfamily\trivlist
  \pagebreak[3]\item[\hskip\labelsep{\normalfont\bfseries #1\ #2\ {\upshape(}#3\/{\upshape)}.}]}
\makeatother

\newcommand{\mypara}[1]{\vspace{3pt}\noindent\textbf{#1.}}

\newcommand{\mynote}[1]{} 

\newcommand{\madiamond}{\ensuremath{\mathord{\vcenter{\hbox{\scalebox{0.8}{\rotatebox[origin=c]{45}{$\Box$}}}}}\mkern1.33mu}}
\newcommand{\mabox}{\ensuremath{\mathord{\vcenter{\hbox{$\Box$}}}\mkern1.33mu}}
\newcommand{\bdb}{\ensuremath{\mabox\madiamond\mabox}}

\sbox{\proofsavebox}{\textcolor{gray!60}{\rule{6pt}{6pt}}}
\makeatletter
\renewenvironment{proof}
  {\@ifnextchar[{\@oprf}{\@nprf}}
  {\hfill\hbox{\proofbox}\endtrivlist}
\makeatother

\begin{document}

\lefttitle{Skvortsov et al.}

\jnlPage{1}{16}
\jnlDoiYr{2026}
\doival{10.1017/xxxxx}

\title[Diamonds Are Forever]{Diamonds Are Forever: Stabilization Semantics for Unrestricted Aggregation and Recursion in Logica}

\begin{authgrp}
\author{\gn{Evgeny} \sn{Skvortsov}}
\affiliation{Google LLC, USA}
\author{\gn{Yilin} \sn{Xia}}
\affiliation{University of Illinois, Urbana-Champaign, USA}
\author{\gn{Ojaswa} \sn{Garg}}
\affiliation{Google LLC, USA}
\author{\gn{Shawn} \sn{Bowers}}
\affiliation{Gonzaga University, USA}
\author{\gn{Bertram} \sn{Lud\"{a}scher}}
\affiliation{University of Illinois, Urbana-Champaign, USA}
\end{authgrp}

\history{\sub{xx xx 2026;} \rev{xx xx 2026;} \acc{xx xx 2026}}
\maketitle

\begin{abstract}
Logica is an open-source logic programming language that compiles to SQL and
runs on platforms including DuckDB, SQLite, PostgreSQL, and BigQuery. Unlike
classic Datalog, Logica permits the free combination of recursion and aggregation,
enabling concise formulations of algorithms from shortest paths to PageRank.
This expressiveness introduces fundamental semantic challenges: aggregate
predicates are updated by replacement rather than accumulation, evaluation is
sensitive to rule scheduling, and programs may converge to meaningful results
without ever reaching a fixpoint, placing them outside the scope of
traditional fixpoint semantics.

We address these challenges with \emph{Defendant-Opponent (DO) semantics}, a
stabilization-based framework for nonmonotonic logic programs. Program
evaluation is modeled as a rewrite system over derivation states. A ground atom
is true if, from every reachable state, there exists a continuation after which
the atom persists in all further derivations. This notion admits two equivalent
characterizations: (1)~game-theoretically, truth is what a Defendant can defend
against any Opponent in a three-turn derivation game; and (2)~modally, a
formula~$t$ is a theorem precisely when the condition $\bdb\,t$ holds in the
derivation graph viewed as a Kripke structure.

We show that DO semantics coincides with classic least fixpoint
semantics for positive Datalog programs and is compatible with both the
Well-Founded Semantics (the two never disagree on definite answers)
and the Stable Model Semantics (every stable model is a DO
interpretation). For programs that converge without reaching a
fixpoint, we introduce $\omega$-limit interpretations, giving rigorous
meaning to iterative computations such as PageRank. DO semantics thus
offers a coherent framework that complements existing logic
programming semantics while supporting recursive aggregation.
\end{abstract}

\noindent\textbf{Keywords:} Logica, Datalog, Nonmonotonic Reasoning, Aggregation, Modal Logic


\section{Introduction}

Logic programming has long served as a cornerstone of declarative computing. Datalog became particularly influential due to its elegant formal semantics grounded in monotonic least fixpoint computations~\citep{vanEmdenKowalski1976,ceri_logic_1990}, enabling concise recursive queries such as transitive closure.

Extending Datalog with negation led to fundamental semantic challenges. Well-Founded Semantics (WFS)~\citep{van1991well,vanGelder1993wfs} uses three-valued logic and an alternating fixpoint construction to handle nonmonotonic negation. Stable Model Semantics (SMS)~\citep{gel1988stable} takes an alternative approach through self-supporting fixpoints, enabling multiple stable interpretations and forming the basis of Answer Set Programming~\citep{Gelfond08,gebser_multi-shot_2019}.

Integrating aggregation into recursive logic programming brought further challenges~\citep{mumick1990aggregation,ross1997monotonic}. Work on stratified aggregates~\citep{kemp1991}, recursive aggregation~\citep{zaniolo_fixpoint_2017}, unified founded semantics~\citep{liu_recursive_2022}, and recursive aggregates in ASP~\citep{faber2004recursive} has made significant progress, but has primarily focused on programs that reach fixpoints. The distinctive challenge lies with programs that \emph{converge without reaching a fixpoint}.

PageRank~\citep{BRIN1998107} is a canonical example: iteration quickly approaches the eigenvector distribution, yet formally no fixpoint is attained. Such programs fall outside the scope of fixpoint-based semantics.

\mypara{Logica and the Semantic Gap}
Logica~\citep{Skvortsov_EDBT_2024,Skvortsov_Datalog2_2024}, an open-source logic programming language developed at Google, compiles to standard SQL and runs on platforms including DuckDB, SQLite, PostgreSQL, and BigQuery~\citep{skvortsov_full_2023}. Unlike classic Datalog, Logica allows unrestricted combinations of recursion and aggregation, addressing practical needs in modern data analytics. This exposes a semantic gap: programs like PageRank are expressible and useful, yet lack formal meaning under traditional fixpoint semantics.

\mypara{DO Semantics}
We introduce \emph{Defendant-Opponent (DO) semantics}, a framework that generalizes classical logic programming semantics to address these limitations. Inspired by Hilbert-style formal systems but extended for nonmonotonic programs, DO semantics defines truth through an adversarial interaction between a \emph{Defendant} (asserting statements) and an \emph{Opponent} (challenging assertions), giving meaning to programs with arbitrary recursion, aggregation, numerical computation, and non-stratified negation.

DO semantics also has a modal logic characterization. In monotonic systems, a formula $t$ is a theorem if it is derivable from axioms $A$, i.e., $A \Vdash \madiamond\, t$.
In nonmonotonic systems where derived facts can be retracted, mere derivability is insufficient. We show that DO-truth corresponds to the modal formula $A \Vdash \bdb\, t$: from any state, it is \emph{always possible} ($\mabox\madiamond$) to reach a state where $t$ holds, and once reached, $t$ \emph{necessarily persists} ($\mabox$) under further derivation. This places nonmonotonic reasoning within S4 modal logic, where the seven distinct modalities provide a complete taxonomy of formula stability.

DO semantics bears a family resemblance to dialogue games for intuitionistic and classical logic~\citep{lorenzen1961}, where truth emerges from a winning strategy in a two-player debate. However, the frameworks differ fundamentally. In Lorenzen games, players argue about the structure of a single formula through attack and defense moves on connectives. In DO semantics, the players jointly construct a \emph{derivation} as a trajectory through the space of derivation states. The Opponent challenges not internal formula structure, but the \emph{robustness} of derivability: can the Defendant reach a state containing the target atom that remains stable under any further moves? This shift from formula-structure to derivation-trajectory games is what allows DO semantics to handle nonmonotonic reasoning.

\mypara{Contributions}
The main contributions of this paper are:
\begin{enumerate}[leftmargin=*,topsep=2pt,itemsep=1pt]
\item A stabilization-based semantic framework (DO semantics) for nonmonotonic logic programs with unrestricted recursion and aggregation, with equivalent game-theoretic and modal ($\bdb$) characterizations.
\item Backward compatibility: DO semantics coincides with classical least fixpoint semantics for positive Datalog (Corollary~\ref{cor:datalog}).
\item Compatibility with WFS (Theorem~\ref{thm:wfs}) and SMS (Theorem~\ref{thm:stable}): the three semantics never disagree on definite answers, and every stable model is a DO interpretation.
\item $\omega$-limit interpretations that give rigorous meaning to convergent programs such as PageRank (Theorems~\ref{thm:pagerank} and~\ref{thm:sincos}).
\end{enumerate}

\mypara{Outline}
Section~\ref{section:logica} introduces Logica through examples. Section~\ref{sec:do_semantics} develops DO semantics, from monotonic formal systems to nonmonotonic rewriting rules, the Thesis Defense game, and modal logic. Section~\ref{sec:interpretations} introduces static, dramatic, and $\omega$-limit interpretations. Section~\ref{sec:connections} establishes compatibility with WFS and SMS.


\section{Logica}\label{section:logica}

This section introduces the fragment of Logica relevant to this paper. Many features (named arguments, meta-programming, database integration) are practical conveniences that do not require the semantic treatment we develop here. A complete formal syntax of this fragment, together with a glossary of the key terms used throughout, is provided in the online supplementary material.

In Logica, predicates begin with capital letters, variables begin with
lowercase letters, and rules end with semicolons.
\begin{lstlisting}[style=logica,basicstyle=\small\ttfamily,columns=fullflexible]
    Sibling(a,b) :- Parent(x,a), Parent(x,b), a!=b ;
    Ancestor(a,b) :- Parent(a,b);
    Ancestor(a,b) :- Ancestor(a,x), Ancestor(x,b);
\end{lstlisting}
Logica supports functional notation: $P(x) = y$ is syntactic sugar for the relation $P(x, y)$ where the last column is viewed as the output.
\begin{lstlisting}[style=logica,basicstyle=\small\ttfamily,columns=fullflexible]
    Square(x) = x * x;
    Hypotenuse(a,b) = Sqrt(Square(a) + Square(b));
\end{lstlisting}
The central extension relevant to this paper is aggregation, which comes in two forms. \emph{Aggregating expressions} appear within rule bodies (similar to ASP syntax \cite{Gelfond08,gebser_multi-shot_2019}):
\begin{lstlisting}[style=logica,basicstyle=\small\ttfamily,columns=fullflexible]
    HighDegree(x) :- E(x,y), Count{z :- E(x,z)} > 5;
\end{lstlisting}
\emph{Predicate-level aggregation} resembles SQL's \texttt{GROUP BY} and aggregates over all derivations of a predicate's value. In this paper, we focus on Logica's predicate-level function value aggregation:
\begin{definition}
A \emph{rule with predicate-level function value aggregation} has the form
$$P(t_1, \ldots, t_n) \ \odot\!= e \ \texttt{:-}\ \mathit{Body}(t_1, \ldots, t_n, t_{n+1}, \ldots, t_m, e)$$
where $\odot$ is an aggregation operator that maps finite multisets to a single value; for each tuple $(t_1, \ldots, t_n)$, $\odot$ aggregates the values of $e$ over all instantiations satisfying $\mathit{Body}$.
\end{definition}
In practice, efficient aggregation requires certain algebraic properties (namely, an Abelian semigroup operator), but this is irrelevant for the semantic treatment; we use \texttt{Sum}, \texttt{Min}, and \texttt{Max} as running examples.

As a simple example, the shortest distances between all pairs of vertices in a graph can be computed in Logica as follows:
\begin{lstlisting}[style=logica,basicstyle=\small\ttfamily,columns=fullflexible]
   D(a,b) Min= E(a,b); 
   D(a,b) Min= D(a,x) + D(x,b); 
\end{lstlisting} 
For each pair $(a, b)$, the two rules define a multiset of candidate distances: the direct edge weight $E(a, b)$ if it exists, together with sums $D(a, x) + D(x, b)$ over all intermediate vertices~$x$. The \texttt{Min=} operator returns the minimum of this multiset, which is assigned as the value of $D$. As we see in Section~\ref{sec:logica-is-nmfs}, the computation of the distance amounts to iterated application of these rules.

PageRank~\citep{BRIN1998107} provides a more challenging example:
\begin{lstlisting}[style=logica,basicstyle=\small\ttfamily,columns=fullflexible]
    PageRank(x) += ResetProb() / N() :- Page(x);
    PageRank(y) += (1.0 - ResetProb()) * PageRank(x) / Degree(x) :- Link(x,y);
\end{lstlisting}
Iteration converges to the eigenvector distribution, yet formally never reaches a fixpoint: each step refines the approximation without stabilizing. Traditional fixpoint semantics cannot assign meaning to this program, which is precisely the gap that DO semantics fills. Logica can also express numerical computations. The following approximates $\pi$ using the Leibniz formula:
\begin{lstlisting}[style=logica,basicstyle=\small\ttfamily,columns=fullflexible]
    L() = 1 :- L = nil;
    L() = L() + 1;
    Pi() = 4 * Sum{(-1) ^ k / (2 * k + 1) :- k in Range(L())};
\end{lstlisting}
The \texttt{in} operator iterates over a collection. The following computes sine and cosine via Euler's method for differential equations.
\begin{lstlisting}[style=logica,basicstyle=\small\ttfamily,columns=fullflexible]
    Infinitesimal() = 1 / L();
    C(0) = 1; 
    S(0) = 0;  
    S(t + dt) = S(t) + C(t) * dt :- dt = Infinitesimal();
    C(t + dt) = C(t) - S(t) * dt :- dt = Infinitesimal();
\end{lstlisting}
%
DO semantics makes this precise through $\omega$-limit interpretations (Section~\ref{sec:interpretations}).

Finally, negation in Logica is encoded via aggregation: \texttt{not\;$P(\vec{x})$} becomes \texttt{Count\{1\,:-\,$P(\vec{x})$\}\;=\;0}, which holds precisely when $P(\vec{x})$ has no derivation in the current state. This reduces negation to aggregation, requiring no additional semantic machinery.

\section{Defendant-Opponent (DO) Semantics}
\label{sec:do_semantics}

We define DO semantics by generalizing Hilbert-style formal systems to nonmonotonic logic. We begin by recalling the standard monotonic case.

\subsection{Monotonic Formal Systems}

\begin{definition}
\label{def:formal-system}
A \emph{formal system} is a triple $(H, A, R)$, where $H$ is a set of \emph{well-formed formulas}, $A \subseteq H$ is a set of axioms, and $R$ is a set of inference rules. Each rule $r \in R$ is applied to a tuple $(t_1, \dots, t_n)$ of well-formed formulas and returns a well-formed formula $r(t_1, \dots, t_n)$ that follows from the rule, i.e., $r: H^n \rightarrow H$.
\end{definition}
In logic programming, $H$ corresponds to the Herbrand base, i.e., the set of all potentially derivable ground atoms.

\begin{definition}
\label{def:derivation-mono}
For a formal system $(H, A, R)$, a sequence $T_0, \dots, T_m$ of sets of well-formed formulas $T_i \subseteq H$ is called a \emph{derivation} if: (1) $T_0 = A$; and (2) for each $i$ there exists some rule $r \in R$ of arity $n$ and some $\vec{t} = (t_1, \dots, t_n)$ with all $t_j \in T_i$ such that $T_{i+1} = T_i \cup \{r(\vec{t})\}$.
\end{definition}

Monotonic derivations are \emph{inflationary}~\citep{vianu_datalog_2021}: each step only \emph{adds} formulas, so the sets $T_i$ form a growing chain converging toward a least fixpoint.

\begin{definition}
\label{def:theorems-mono}
The \emph{set of theorems} $T$ of a formal system $(H, A, R)$ is the set of all formulas that occur in some derivation:
$$T = \bigcup_{\substack{T_0, \dots, T_m \\ \text{is a derivation}}} \bigcup_{i=0}^{m} T_i.$$
\end{definition}
It is well known that for finite systems, this set of theorems coincides with the least fixpoint of the immediate consequence operator~\citep{vanEmdenKowalski1976,abiteboul1995foundations}, providing the classic semantics for Datalog.


\subsection{Nonmonotonic Formal Systems}

In many practical situations, derived formulas must be \emph{retracted}. For example, when computing shortest paths, we may derive distance $k$ between two vertices, then find a shorter path of length $k' < k$ and want to replace the old value.

Various approaches model such retractions, e.g., via deletion rules~\citep{vianu_datalog_2021}. Here, we consider a \emph{noninflationary} approach where rules rewrite the entire state at each step (similar to noninflationary \emph{while} queries~\citep{abiteboul1995foundations}).

\begin{definition}
\label{def:nmfs}
A \emph{nonmonotonic formal system} is a triple $(H, A, W)$, where $H$ is a set of well-formed formulas, $A \subseteq H$ is a set of axioms, and $W$ is a set of rewrite rules. Each rewrite rule $w \in W$ maps sets of formulas to sets of formulas: $w: 2^H \rightarrow 2^H$.
\end{definition}

\begin{definition}
\label{def:derivation-nm}
For a nonmonotonic formal system $(H, A, W)$, a sequence $T_0, \dots, T_m$ with $T_i \subseteq H$ is called a \emph{derivation} if $T_0 = A$ and $T_{i+1} = w(T_i)$ for some $w \in W$.
\end{definition}
It is not immediately clear what the theorems of such a system should be, since derived formulas may be introduced and subsequently retracted as computation proceeds. We define truth based on the \emph{eventual behavior} of derivations.


\subsection{Thesis and Truth}

Rather than limiting attention to programs with fixpoints, we consider a semantics based on a \emph{thesis}: a claim about the eventual behavior of derivations.

\begin{definition}
\label{def:thesis}
Given a nonmonotonic formal system $(H, A, W)$, a \emph{thesis} is a set $M \subseteq 2^H$ of sets of formulas such that a set of formulas $T \subseteq H$ is \emph{consistent} with $M$ if $T \in M$, and otherwise is \emph{inconsistent} with $M$.
\end{definition}
Intuitively, a thesis $M$ represents a claim such as ``a deep enough derivation will arrive at a state in $M$''. To make this precise, we introduce the following graph-theoretic notions.

\begin{definition}
\label{def:state}
Given a nonmonotonic formal system $(H, A, W)$, a set $T \subseteq H$ is a \emph{state of derivation} if there exists a derivation of $T$, i.e., $T_0, \dots, T_{m-1}, T$.
\end{definition}
Derivation states form a derivation graph.
\begin{definition}\label{def:derivationgraph}
The \emph{graph of derivation} $G(H, A, W)$ is a directed graph whose vertices are states of derivation and whose edges are pairs $(T, T')$ such that $T' = w(T)$ for some $w \in W$.
\end{definition}
A state $T'$ is \emph{reachable} from $T$ if $T = T'$ or there is a finite sequence of edges in $G$ from $T$ to $T'$; that is, reachability is the reflexive--transitive closure of the edge relation.
Truth is defined over this graph.
\begin{definition}\label{def:truth}
A thesis $M$ \emph{holds} (equivalently: $M$ is \emph{true}, $M$ is a \emph{truth}) in a nonmonotonic formal system $(H, A, W)$ with derivation graph $G$ if for any derivation state $T_i$ there exists a derivation state $T_j \in M$ such that:
(1) $T_j$ is reachable from $T_i$; and (2) for every derivation state $T_k$ reachable from $T_j$, $T_k \in M$.
\end{definition}
Thus $M$ holds if no matter where you are in the derivation, you can steer toward a state in $M$, and once there, any further derivation stays within $M$. Truth permits \emph{stabilization}: derivation can always be
continued to achieve and maintain consistency with
the true thesis.

\begin{definition}
\label{def:theorem}
A formula $t \in H$ is a \emph{theorem} in $(H, A, W)$ if the thesis $\{T \subseteq H \mid t \in T\}$ holds. A formula $t$ is an \emph{anti-theorem} if the thesis $\{T \subseteq H \mid t \notin T\}$ holds. A formula that is either a theorem or an anti-theorem is called \emph{definite}.
\end{definition}
Note that a formula may be neither a theorem nor an anti-theorem, i.e., its status may be genuinely undetermined. It also follows that no formula can be both a theorem and an anti-theorem: if both held, then by Lemma~\ref{lemma:truth_conjunction} the empty thesis $\{T \mid t\in T\} \cap \{T \mid t\notin T\} = \emptyset$ would be a truth, which is impossible since no derivation state lies in $\emptyset$.


\subsection{Logica as a Nonmonotonic Formal System}
\label{sec:logica-is-nmfs}

We now define how a Logica program gives rise to a nonmonotonic formal system.

\begin{definition}
\label{def:logica-system}
Given a Logica program $\mathcal{P}$ over a database (EDB) $D$, the corresponding \emph{nonmonotonic formal system} $(H, A, W)$ is defined as follows:
\begin{itemize}
\item $H$, the Herbrand base, is the set of all ground atoms of the form $P(v_1, \ldots, v_n)$ for each predicate $P$ in $\mathcal{P}$ and all possible values. For functional predicates, we assume the last argument $v_n$ is the value argument.
\item $A = D$, the extensional database.
\item $W = \{w_P \mid P \text{ is an intensional predicate in } \mathcal{P}\}$, where each $w_P$ is a rewriting rule that recomputes all values of predicate $P$ based on the current state. Formally, $w_P(T) = (T \setminus \{P(\vec{v}) \in T\}) \cup \mathit{eval}_P(T)$, where $\mathit{eval}_P(T)$ evaluates all rules defining $P$ against $T$ (snapshot semantics): body literals are looked up in $T$ as-is, without recomputing other predicates. All non-$P$ atoms, including EDB facts, are unchanged.
\end{itemize}
\end{definition}
A key observation is that $W$ contains one rewriting rule per \emph{predicate}, not per syntactic rule. Consider again the program to compute shortest path distances \texttt{D}.
\begin{lstlisting}[style=logica,basicstyle=\small\ttfamily,columns=fullflexible]
    D(a,b) Min= E(a,b);
    D(a,b) Min= D(a,x) + D(x,b);
\end{lstlisting}
The two clauses in this case constitute a single rewriting rule $w_D$ that computes the minimum over all derivable values, both direct edges and composite paths. 
This design reflects a fundamental feature of aggregation: when $w_D$ is applied, it replaces the \emph{entire} predicate $D$ with newly computed values. If the current state contains \texttt{D("a","b") = 42} from an earlier derivation, but a shorter path of length $37$ is now derivable, the new state will contain \texttt{D("a","b") = 37}, where the old value is \emph{replaced}, not accumulated. This is precisely what makes the system nonmonotonic.

Each edge in the derivation graph (Def.~\ref{def:derivationgraph}) corresponds to applying one rewrite rule $w_P \in W$. At each step, we choose \emph{which predicate} to recompute, introducing nondeterminism. For single-predicate programs like shortest paths, only one rule applies. For programs with multiple interdependent predicates, different choices lead to different derivation graph trajectories.



\subsection{The Thesis Defense Game}

Truth admits a game-theoretic characterization that gives DO semantics its name.

\begin{definition}
The \emph{Thesis Defense game}\label{def:thesis_defense} for a nonmonotonic formal system $(H, A, W)$ and thesis $M$ is a two-player game between a \emph{Defendant} and an \emph{Opponent}. The game constructs a derivation in three turns that begins with the Opponent:
\begin{itemize}
\item \textbf{Turn 1 (Opponent):} Starting from $T_0 = A$, the Opponent extends the derivation to some state $T_{m_1}$.
\item \textbf{Turn 2 (Defendant):} The Defendant continues from $T_{m_1}$ to some state $T_{m_2}$.
\item \textbf{Turn 3 (Opponent):} The Opponent continues from $T_{m_2}$ to a final state $T_{m_3}$.
\end{itemize}
In a single turn a player may apply any finite number of rewrite rules, i.e., move to any state reachable from the current one; the available move sets may be infinite.
The Defendant wins if $T_{m_3} \in M$; the Opponent wins otherwise.
\end{definition}

\begin{theorem}\label{theorem:defence_game}
For a nonmonotonic formal system $(H, A, W)$ and thesis $M$, the following are equivalent:
\begin{enumerate}
\item Thesis $M$ holds in $(H, A, W)$.
\item The Defendant has a winning strategy.
\item The Opponent has no winning strategy.
\end{enumerate}
\end{theorem}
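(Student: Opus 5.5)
The plan is to prove the cycle (1)$\Rightarrow$(2)$\Rightarrow$(3)$\Rightarrow$(1). Everything is phrased over the derivation graph $G$ and its reachability relation. The game has only three turns and each turn is a choice of a reachable state, so no determinacy theorem is needed. The statement reduces to unfolding a quantifier prefix: the Defendant wins exactly when
\[
\forall T_{m_1}\ \exists T_{m_2}\ \forall T_{m_3}:\ T_{m_3}\in M,
\]
where $T_{m_1}$ ranges over states reachable from $A$ (that is, all derivation states), $T_{m_2}$ over states reachable from $T_{m_1}$, and $T_{m_3}$ over states reachable from $T_{m_2}$. This prefix is precisely Definition~\ref{def:truth}.

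\textbf{(1)$\Rightarrow$(2).} Assume $M$ holds. Define the Defendant's strategy as follows: on the Opponent's first move $T_{m_1}$, which is a derivation state, pick the state $T_j$ promised by Definition~\ref{def:truth} with $T_i=T_{m_1}$, and move there. This uses a choice function, invoking the axiom of choice if the move sets are infinite. Whatever $T_{m_3}$ the Opponent then reaches from $T_j$ lies in $M$ by clause~(2) of the definition. Note that the Opponent may make zero moves, since reachability is reflexive; this is exactly why clause~(2) must quantify over all $T_k$ reachable from $T_j$, including $T_j$ itself.

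\textbf{(2)$\Rightarrow$(3).} If the Defendant has a winning strategy and the Opponent had one too, play the two strategies against each other. The resulting play would have to be won by both players, which is impossible since exactly one of $T_{m_3}\in M$ and $T_{m_3}\notin M$ holds.

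\textbf{(3)$\Rightarrow$(1).} I expect this to be the only step needing care, and I would argue by contraposition. Suppose $M$ does not hold. Then there is a derivation state $T_i$ such that every state $T_j$ reachable from $T_i$ either lies outside $M$ or has some reachable $T_k\notin M$. In fact, in both cases some $T_k$ reachable from $T_j$, possibly $T_j$ itself, lies outside $M$. The Opponent's strategy is then:
\begin{itemize}
\item in Turn~1, move from $A$ to $T_i$, which is possible because $T_i$ is a derivation state;
\item in Turn~3, given the Defendant's $T_{m_2}$ (necessarily reachable from $T_i$), move to a witness $T_k\notin M$ reachable from $T_{m_2}$, again fixed by a choice function.
\end{itemize}
This strategy wins against every Defendant play, contradicting (3). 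The subtle point to verify is exactly the case split on whether $T_j\in M$, which reflexivity of reachability makes uniform.
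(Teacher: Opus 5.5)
Your proof is correct and follows the paper's approach: the paper treats (1)$\Leftrightarrow$(2) as a direct unfolding of the $\forall\exists\forall$ pattern in Definition~\ref{def:truth}, and cites determinacy of the fixed-length, draw-free game for (2)$\Leftrightarrow$(3). Your contrapositive construction of the Opponent's winning strategy in (3)$\Rightarrow$(1) is that determinacy argument carried out explicitly for this three-turn game.
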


Equivalence of (2) and (3) holds by determinacy (the number of turns is fixed at three and there are no draws; the move sets may be infinite). Equivalence with (1) captures the essence of DO semantics: \emph{truth is what you can defend}.
The O-D-O turn order is essential: reversing it or reducing to two turns would break the closure of truths under intersection.

\begin{lemma}\label{lemma:truth_conjunction}
If $M_1$ and $M_2$ are truths, then $M_1 \cap M_2$ is also a truth.
\end{lemma}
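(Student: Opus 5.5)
The argument works directly from Definition~\ref{def:truth} and composes the two stabilization witnesses sequentially. It uses only that reachability is reflexive and transitive, and that each truth comes with a ``stable region'' closed under reachability. For a thesis $M$, call a derivation state $T$ \emph{$M$-stable} if every state reachable from $T$ lies in $M$. Then $M$ holds exactly when every derivation state can reach some $M$-stable state. An $M$-stable state lies in $M$ by reflexivity. By transitivity, every state reachable from an $M$-stable state is again $M$-stable, so stability is inherited forward along the derivation graph.

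Fix an arbitrary derivation state $T_i$. Since $M_1$ holds, there is a state $T_j$ reachable from $T_i$ that is $M_1$-stable. Next, apply the hypothesis that $M_2$ holds at the state $T_j$ itself, which is legitimate because $T_j$ is a derivation state and Definition~\ref{def:truth} quantifies over all derivation states. This gives a state $T_l$ reachable from $T_j$ that is $M_2$-stable. By transitivity, $T_l$ is reachable from $T_i$. Because $T_l$ is reachable from the $M_1$-stable state $T_j$, it is itself $M_1$-stable.

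Consequently, every state $T_k$ reachable from $T_l$ lies in both $M_1$ and $M_2$, that is, in $M_1\cap M_2$; in particular $T_l\in M_1\cap M_2$. So $T_l$ is an $(M_1\cap M_2)$-stable state reachable from $T_i$. Since $T_i$ was arbitrary, $M_1\cap M_2$ is a truth.

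There is no serious obstacle. The one point to get right is the order of the witnesses: we must first reach $M_1$-stability and then invoke $M_2$ from that new state, rather than picking the two witnesses independently from $T_i$. Choosing them independently could yield two states on divergent branches of the graph, and neither would need to be stable for both theses. The argument also needs the forward-inheritance of stability, which rests on transitivity of reachability. This matches the remark after Theorem~\ref{theorem:defence_game} that the ``$\mabox$'' part (persistence) is what makes truths closed under intersection. Equivalently, in game terms, the Defendant in Turn~2 plays the $M_1$-strategy and then the $M_2$-strategy in succession.
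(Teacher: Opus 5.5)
Your proof is correct and follows the same route as the paper's sketch. You first reach an $M_1$-stable state, then invoke the truth of $M_2$ from that state, and finally use that $M_1$-stability is inherited along reachability to conclude the resulting state is stable for $M_1 \cap M_2$; you phrase this directly via Definition~\ref{def:truth} where the paper uses the game, but the argument is the same.
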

\begin{proof}[Proof (Sketch)]
Since $M_1$ is true, the Defendant has a strategy to reach a state $T_j \in M_1$ that is stable within $M_1$. From $T_j$, since $M_2$ is also true, the Defendant can further reach a state $T_k \in M_2$ that is stable within $M_2$. Because $T_k$ is reachable from $T_j$ and $T_j$ is stable within $M_1$, we have $T_k \in M_1$ as well. Thus $T_k \in M_1 \cap M_2$, and any state reachable from $T_k$ remains in both $M_1$ and $M_2$.
\end{proof}


\subsection{Truth as Modality}

DO truth has a natural modal logic interpretation. The derivation graph is a Kripke frame, with derivation states as worlds and reachability as the accessibility relation.

\begin{definition}
\label{def:modal-statement}
Let $(H, A, W)$ be a nonmonotonic formal system. A \emph{modal statement} is defined inductively:
\begin{itemize}
\item For any formula $h \in H$ statements $h$ and $\neg h$ are modal statements.
\item If $t$ is a modal statement, then $\mabox\,t$ (read: ``necessarily $t$'') is a modal statement.
\item If $t$ is a modal statement, then $\madiamond\,t$ (read: ``possibly $t$'') is a modal statement.
\end{itemize}
\end{definition}

\begin{definition}
\label{def:modal-satisfaction}
Whether a modal statement $t$ holds at a state of derivation $T$, written $T \Vdash t$, is defined inductively, where any state $T$ is by definition considered reachable from itself:
\begin{itemize}
\item $T \Vdash h$ (for $h \in H$) if and only if $h \in T$.
\item $T \Vdash \neg h$ (for $h \in H$) if and only if $h \not \in T$.
\item $T \Vdash \mabox\,t$ if and only if $t$ holds at all states reachable from $T$.
\item $T \Vdash \madiamond\,t$ if and only if $t$ holds at some state reachable from $T$.
\end{itemize}
\end{definition}
We can now express DO-truth using a single modal-logic formula.
\begin{observation}
\label{obs:bdb}
In a nonmonotonic formal system $(H, A, W)$, a formula $t \in H$ is a theorem iff
$A \Vdash \bdb\, t$.
\end{observation}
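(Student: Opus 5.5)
The plan is to unfold the modal formula $\bdb\,t$ using Definition~\ref{def:modal-satisfaction} and match it, clause by clause, with Definition~\ref{def:truth} for the thesis $M_t = \{T \subseteq H \mid t \in T\}$. Here $t$ is an atom, so $T \Vdash t$ is simply $t \in T$, which is the same as $T \in M_t$. Working from the innermost modality outward, $T_j \Vdash \mabox\, t$ says that every state $T_k$ reachable from $T_j$ lies in $M_t$. Since reachability is reflexive, this includes $T_j \in M_t$ itself.

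Next, $T_i \Vdash \madiamond\mabox\, t$ says there is a state $T_j$, reachable from $T_i$, with $T_j \Vdash \mabox\, t$. Combined with the previous step, these are exactly conditions (1) and (2) of Definition~\ref{def:truth}. Finally, $A \Vdash \mabox\madiamond\mabox\, t$ says that every state $T_i$ reachable from $A$ satisfies $\madiamond\mabox\, t$.

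The one point needing care is the quantification domain. Definition~\ref{def:truth} quantifies over all derivation states. The modal clause quantifies over states reachable from $A$. These coincide: by Definitions~\ref{def:state} and~\ref{def:derivationgraph}, a derivation state is precisely a vertex reachable from $T_0 = A$, counting $A$ itself via a derivation of length zero under reflexivity. The worlds that the inner modalities range over are also states reachable from a reachable state, so they are derivation states too. Thus all three quantifiers range over exactly the same sets in both formulations, and the two conditions are literally the same statement. This gives both directions of the equivalence simultaneously, so $t$ is a theorem (Definition~\ref{def:theorem}) iff $A \Vdash \bdb\, t$.

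I expect no substantive obstacle. The only thing to check explicitly is this domain bookkeeping, together with the fact that the reflexive convention used in both the truth definition and the modal semantics is the same, so that $T_j \in M$ is implied by the inner $\mabox$. As an aside, the same argument with $t$ replaced by $\neg t$ characterizes anti-theorems as $A \Vdash \bdb\,\neg t$.
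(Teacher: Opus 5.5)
Your proposal is correct and follows the paper's justification: the paper supports this observation only by unpacking $\bdb\,t$ against the definition of truth for the thesis $\{T \mid t \in T\}$, glossing the three modalities as the three turns of the Thesis Defense game. Your explicit checks fill in bookkeeping the paper leaves implicit: that derivation states are exactly the vertices reachable from $A$ (including $A$ itself), and that reflexive reachability makes the condition $T_j \in M$ redundant given the inner $\mabox$.
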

Unpacking: from $A$, \emph{necessarily} (regardless of the Opponent's first move), it is \emph{possible} (the Defendant finds a way) to reach a state where $t$ \emph{necessarily} holds (persists under any further moves). The three alternations correspond to the three turns of the Thesis Defense game. The central $\madiamond$ is the diamond that guarantees forever: it ensures a path exists to a region where $t$ is permanently stable. In the framework of \citet{AlpernSchneider1985}, this stabilization condition decomposes into liveness (a stable region is reachable) and safety ($t$ persists once reached).
For monotonic systems, $A \Vdash \madiamond\, t$ suffices: once derived, a formula persists, so $\madiamond\, t$ implies $\bdb\, t$.

\begin{definition}
\label{def:monotonic-nmfs}
A nonmonotonic formal system $(H, A, W)$ is \emph{monotonic} if every rewrite rule is inflationary, i.e., $T \subseteq w(T)$ for all $w \in W$ and all states $T$ (equivalently, no atom is ever retracted). The monotonic formal systems considered earlier are the special case $w_r(T) = T \cup \{r(\vec{t})\}$.
\end{definition}

\begin{observation}
In a monotonic formal system $(H, A, W)$, for any $t \in H$, $A \Vdash \madiamond\, t$ iff $A \Vdash \bdb\, t$.
\end{observation}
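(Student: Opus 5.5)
The plan is to prove the two implications separately. Both rest on one consequence of inflationarity: if $T'$ is reachable from $T$, then $T \subseteq T'$.

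First I establish this persistence fact. Reachability is the reflexive--transitive closure of the edge relation. Each edge $(T, w(T))$ satisfies $T \subseteq w(T)$ by Definition~\ref{def:monotonic-nmfs}. A straightforward induction on the length of the path from $T$ to $T'$ then gives $T \subseteq T'$. Consequently, for any $t \in H$ and any state $T$, $T \Vdash t$ implies $T \Vdash \mabox\, t$: every state reachable from $T$ contains $T$, and hence contains $t$.

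\emph{($\Leftarrow$)} Assume $A \Vdash \bdb\, t$. Since $A$ is reachable from itself, the outer $\mabox$ gives $A \Vdash \madiamond\mabox\, t$. So there is a state $T_j$ reachable from $A$ with $T_j \Vdash \mabox\, t$. Again by reflexivity, $t \in T_j$, hence $A \Vdash \madiamond\, t$. This direction uses only reflexivity and holds in any nonmonotonic formal system.

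\emph{($\Rightarrow$)} Assume $A \Vdash \madiamond\, t$, so some state $T^*$ reachable from $A$ contains $t$. I must show that every state $T_i$ reachable from $A$ satisfies $\madiamond\mabox\, t$. This is the only nontrivial step, because the Opponent's first move might lead away from $T^*$. I intend to handle it as follows.

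Let $A = S_0, S_1, \dots, S_m = T^*$ be a derivation with $S_{k+1} = w_k(S_k)$. From $T_i$, replay the same sequence of rules: set $U_0 = T_i$ and $U_{k+1} = w_k(U_k)$. The goal is $t \in U_m$, so that $U_m$ is reachable from $T_i$, and then persistence yields $U_m \Vdash \mabox\, t$. Proving $t \in U_m$ needs a monotonicity property of the rules: $S \subseteq U$ implies $w(S) \subseteq w(U)$. Then induction from $A \subseteq T_i$ gives $S_k \subseteq U_k$, hence $t \in T^* \subseteq U_m$.

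For the special case $w_r(T) = T \cup \{r(\vec t)\}$, and for positive Datalog rules, this monotonicity holds. The reason is that a rule firing on premises present in $S$ also fires in any superset. If I only have inflationarity, I would state the argument for these systems, since they are the ones the paper identifies with monotonic formal systems.

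The main obstacle is this replay step. Inflationarity alone does not guarantee that a formula derivable from $A$ remains derivable after a detour. For a fully general inflationary $w$, the implication could fail, for example with a rule that adds $t$ only when applied to exactly the set $A$. So I expect the proof to either invoke the order-preserving nature of the rules, or interpret ``monotonic'' in the sense of the earlier monotonic formal systems, where rules add conclusions of premises. Under that reading the argument goes through as above.
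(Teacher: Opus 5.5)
Your $(\Leftarrow)$ direction and the persistence lemma are correct, and your suspicion about $(\Rightarrow)$ is well founded. The paper's only justification is the sentence ``once derived, a formula persists, so $\madiamond\, t$ implies $\bdb\, t$''. Persistence upgrades $A \Vdash \madiamond\, t$ only to $A \Vdash \madiamond\mabox\, t$. It never handles the outer $\mabox$, i.e.\ an Opponent's first move that steers away from $T^*$, which is exactly the step you isolated.

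Under the paper's literal Definition~\ref{def:monotonic-nmfs} (inflationary only, ``no atom is ever retracted'') the observation is in fact false. Take $H=\{b,t\}$ and $A=\emptyset$. Let $w_1(T)=T\cup\{t\}$ if $b\notin T$ and $w_1(T)=T$ otherwise. Let $w_2(T)=T\cup\{b\}$ if $t\notin T$ and $w_2(T)=T$ otherwise. This is add-only firing of $t\leftarrow\neg b$ and $b\leftarrow\neg t$. Both rules are inflationary and $\{t\}$ is reachable, so $A\Vdash\madiamond\, t$. But $\{b\}$ is reachable and is a fixpoint, so $A\not\Vdash\mabox\madiamond\, t$, and hence $A\not\Vdash\bdb\, t$. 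Your own sketch also works, but only after you add a second inflationary rule that leaves $A$ without adding $t$. With only the rule that fires exactly at $A$, $t$ remains always derivable.

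Your replay argument, under the extra hypothesis that each $w$ is order-preserving ($S\subseteq U$ implies $w(S)\subseteq w(U)$), supplies precisely the missing idea. The inclusion $A\subseteq T_i$ propagates to $S_k\subseteq U_k$ for all $k$. So $t\in U_m$ with $U_m$ reachable from $T_i$, and persistence gives $U_m\Vdash\mabox\, t$.

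The hypothesis holds for the rules $T\cup\{r(\vec t)\}$ of the original formal systems. It also holds for the per-predicate rules $w_P(T)=(T\setminus\{P(\vec v)\in T\})\cup\mathit{eval}_P(T)$ of positive programs, where $\mathit{eval}_P$ is monotone. Inflationarity of $w_P$ there is not automatic, since $w_P$ replaces the $P$-atoms wholesale; on reachable states it follows from a short induction showing every derived atom stays supported. So Corollary~\ref{cor:datalog} is rescued by your argument.

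In short, your proposal proves the correct version of the statement and is more careful than the paper's argument. ``Monotonic'' in Definition~\ref{def:monotonic-nmfs} should be read (or amended) as inflationary \emph{and} order-preserving.
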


In positive Datalog, derivation is monotonic: derived facts can only grow. The least fixpoint consists of atoms satisfying $A \Vdash \madiamond\, t$, which are exactly those satisfying $A \Vdash \bdb\, t$.

\begin{corollary}\label{cor:datalog}
For positive Datalog programs (without negation or nonmonotonic aggregation), DO semantics coincides with classical least-fixpoint semantics: a ground atom is a DO theorem iff it belongs to the least fixpoint.
\end{corollary}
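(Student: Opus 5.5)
We prove Corollary~\ref{cor:datalog} by instantiating the Logica construction of Definition~\ref{def:logica-system} for a positive program $\mathcal{P}$ over EDB $D$, showing that this system is monotonic in the sense of Definition~\ref{def:monotonic-nmfs}, and then combining the preceding Observation ($A \Vdash \madiamond\, t$ iff $A \Vdash \bdb\, t$) with Observation~\ref{obs:bdb}. This reduces the statement to one claim: $A \Vdash \madiamond\, t$ holds iff $t \in \mathrm{lfp}(T_{\mathcal{P}})$, where $T_{\mathcal{P}}$ is the immediate consequence operator (with EDB facts included).

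\textbf{Step 1: reachable states lie below the least fixpoint and are closed downward under the rules.} By induction on derivation length we show that every reachable state $T$ satisfies $D \subseteq T \subseteq \mathrm{lfp}$. The base case is $T_0 = A = D$. For the inductive step, $\mathit{eval}_P(T) \subseteq \mathit{eval}_P(\mathrm{lfp}) \subseteq \mathrm{lfp}$, because positive bodies are monotone in $T$.

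We prove simultaneously that every reachable $T$ is \emph{supported-closed*}: each $P$-atom in $T$ lies in $\mathit{eval}_P(T)$. This holds initially, since $D$ contains no IDB atoms. Suppose we apply $w_P$ to get $T'$. Then $P$-atoms of $T'$ are exactly $\mathit{eval}_P(T) \subseteq \mathit{eval}_P(T')$, using $T \subseteq T'$ inductively. Atoms of other predicates $Q$ were in $\mathit{eval}_Q(T) \subseteq \mathit{eval}_Q(T')$.

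Given this invariant, $w_P(T) = (T \setminus T|_P) \cup \mathit{eval}_P(T) \supseteq T$. So every rewrite is inflationary on reachable states, and the system is monotonic. This invariant is the main obstacle: $w_P$ syntactically deletes $P$-atoms, so inflationarity does not follow from the rule shape alone. It must be proved jointly with monotonicity of $\mathit{eval}$ along derivations, which is why we do the two inductions together.

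\textbf{Step 2: every atom of the least fixpoint is reachable.} Fix an enumeration $P_1, \dots, P_k$ of the IDB predicates and apply the round $w_{P_1} \circ \cdots \circ w_{P_k}$ repeatedly. By monotonicity, the state after $n$ rounds contains $T_{\mathcal{P}}^n(\emptyset) \cup D$. This is a routine induction: each atom derived at stage $n{+}1$ has its body atoms present before its predicate is recomputed in round $n{+}1$.

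By Kleene's theorem, $\mathrm{lfp} = \bigcup_n T_{\mathcal{P}}^n$. Hence every $t \in \mathrm{lfp}$ appears in some reachable state, i.e.\ $A \Vdash \madiamond\, t$. Conversely, if $A \Vdash \madiamond\, t$, then $t$ lies in some reachable state and so $t \in \mathrm{lfp}$ by Step 1. For infinite Herbrand bases with function symbols, a single atom still needs only finitely many stages, so no finiteness assumption is required.

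\textbf{Step 3: conclude.} By Step 1 and the Observation for monotonic systems, $A \Vdash \madiamond\, t$ iff $A \Vdash \bdb\, t$. By Observation~\ref{obs:bdb}, the latter is equivalent to $t$ being a DO theorem. By Step 2, $A \Vdash \madiamond\, t$ is equivalent to membership in the least fixpoint.

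Aggregates that are monotone in the Datalog sense, such as \texttt{Min=} on values, are excluded by the hypothesis, so bodies here are plain conjunctions of atoms.
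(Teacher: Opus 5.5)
Your overall route matches the paper's. The paper argues that positive Datalog derivations are monotonic, invokes the Observation that $A \Vdash \madiamond\, t$ iff $A \Vdash \bdb\, t$ in monotonic systems, and identifies derivable atoms with the least fixpoint. Your supported-closed invariant adds rigor the paper skips. Since $w_P$ deletes every $P$-atom before re-adding $\mathit{eval}_P(T)$, inflationarity on reachable states does need the simultaneous induction you give, and that part is sound.

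There is one step whose justification fails as written: Step 3's appeal to the Observation. Inflationarity alone does not give $\madiamond\, t \Rightarrow \bdb\, t$. It supplies the final $\mabox$ (once $t$ is present it persists), but not the outer $\mabox\madiamond$: that $t$ is reachable from \emph{every} reachable state, not just from $A$. Here is a counterexample. Take $A=\emptyset$, with $w_1(T)=T\cup\{t\}$ if $s\notin T$ (else $T$) and $w_2(T)=T\cup\{s\}$ if $t\notin T$ (else $T$). Both rules are inflationary and $A\Vdash\madiamond\, t$, yet the reachable state $\{s\}$ never reaches $t$. So the Observation, and the paper's one-line ``once derived, a formula persists'', needs an extra confluence or monotonicity property of the rules.

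For positive Datalog that property does hold, and your Step 2 already contains the argument. Run the round-robin schedule from an arbitrary reachable state $T$ instead of from $A$. Because $T\supseteq D$, every rule is inflationary on the states visited, and $\mathit{eval}_P$ is monotone, the same induction shows that after $n$ rounds the state contains $T_{\mathcal{P}}^n(\emptyset)$. Hence every $t\in\mathrm{lfp}$ is reachable from every state and then persists, which is exactly $A\Vdash\bdb\, t$. With Step 2 stated this way you can drop the Observation entirely. The converse direction, $\bdb\, t\Rightarrow\madiamond\, t\Rightarrow t\in\mathrm{lfp}$ via reflexivity and Step 1, stands as you wrote it.
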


The DO accessibility relation is reflexive and transitive, implying the modal logic S4. It is well known that in S4, arbitrary sequences of
$\mabox$ and $\madiamond$ collapse to just seven distinct modalities~\citep{Chellas_1980}, as shown on the left in Figure~\ref{fig:modalities}.
Figure~\ref{fig:modalities} (right) interprets these modalities: $t$ is \emph{derivable} ($\madiamond\, t$) if reachable from the axioms; \emph{provable} ($\madiamond\mabox\,t$) if some reachable state contains $t$ permanently; and \emph{always derivable} ($\mabox\madiamond\,t$) if from any state, a state with $t$ can be reached. DO-theoremhood ($\bdb$) is strictly stronger than each (but weaker than~$\mabox\,t$).

\begin{figure}[htbp]
\begin{center}
\resizebox{\textwidth}{!}{%
\begin{tikzpicture}[
  >={Stealth[length=4pt]},
  imp/.style={->, gray, shorten >=2pt, shorten <=2pt},
  m/.style={font=\small, inner sep=2pt},
  lab/.style={font=\small, align=center, inner sep=2pt},
  yscale=1.25,
  xscale=0.8
]
\node[m] (bt)   at (0,0)    {$\mabox t$};
\node[m] (bdbt) at (-1.6,1) {$\bdb t$};
\node[m] (dbt)  at (-3.2,2) {$\madiamond\mabox t$};
\node[m] (bdt)  at (0,2)    {$\mabox\madiamond t$};
\node[m] (t)    at (3.2,2)  {$t$};
\node[m] (dbdt) at (-1.6,3) {$\madiamond\mabox\madiamond t$};
\node[m] (dt)   at (0,4)    {$\madiamond t$};
\draw[imp] (bt) -- (bdbt);  \draw[imp] (bt) -- (t);
\draw[imp] (bdbt) -- (dbt); \draw[imp] (bdbt) -- (bdt);
\draw[imp] (dbt) -- (dbdt); \draw[imp] (bdt) -- (dbdt);
\draw[imp] (dbdt) -- (dt);  \draw[imp] (t) -- (dt);
\begin{scope}[xshift=9cm]
\node[lab] (Rbt)   at (0,0)    {non removable axiom};
\node[lab] (Rbdbt) at (-1.6,1) {theorem};
\node[lab] (Rdbt)  at (-3.2,2) {provable};
\node[lab] (Rbdt)  at (0,2)    {always derivable};
\node[lab] (Rt)    at (3.2,2)  {axiom};
\node[lab] (Rdbdt) at (-1.6,3) {not anti-theorem};
\node[lab] (Rdt)   at (0,4)    {derivable};
\draw[imp] (Rbt) -- (Rbdbt);  \draw[imp] (Rbt) -- (Rt);
\draw[imp] (Rbdbt) -- (Rdbt); \draw[imp] (Rbdbt) -- (Rbdt);
\draw[imp] (Rdbt) -- (Rdbdt); \draw[imp] (Rbdt) -- (Rdbdt);
\draw[imp] (Rdbdt) -- (Rdt);  \draw[imp] (Rt) -- (Rdt);
\end{scope}
\end{tikzpicture}%
}
\end{center}
\caption{\small The seven distinct S4 modalities (left, strongest $\mabox\,t$ at bottom) with their DO-semantics interpretations (right). Here $\mabox$ is ``necessarily'' and $\madiamond$ is ``possibly'' in the standard modal reading; the right side shows what each modality means for derivations. Arrows point upward in the direction of implication: a lower (stronger) modality entails a higher (weaker) one, e.g., $\bdb\,t$ (theorem) implies $\mabox\madiamond\,t$ (always derivable). Note: ``axiom'' means $t$ holds at the current state; ``non removable axiom'' ($\mabox\,t$) means $t$ holds at every reachable state.}
\label{fig:modalities}
\end{figure}


\section{Interpretations in DO Semantics}
\label{sec:interpretations}

Having defined which theses hold, we turn to interpretation. In classic logic programming, an interpretation is a model consistent with all theorems. For nonmonotonic systems, we distinguish \emph{static} and \emph{dramatic} interpretations.

\subsection{Static Interpretations}

\begin{definition}
\label{def:static-interp}
Let $(H, A, W)$ be a nonmonotonic formal system and let $\mathcal{T}$ be the set of all truths, that is all theses that hold in it. If a state of derivation $T$ belongs to the intersection of all truths
$T \in \bigcap_{M \in \mathcal{T}} M$, then $T$
is called a \emph{static interpretation} of the system.
\end{definition}

A static interpretation is a state of derivation consistent with \emph{every} truth: a stable resting point respecting all defensible claims. For finite systems, such interpretations always exist.

\begin{theorem} \label{theorem:static-sync}
Let $(H, A, W)$ be a nonmonotonic formal system with a finite derivation graph. 
Then static interpretations are precisely the nodes of the derivation graph
that belong to the terminal (sink) 
strongly connected components.
\end{theorem}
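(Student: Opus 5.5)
Let $G$ be the finite derivation graph, and let $S$ be the set of nodes lying in terminal strongly connected components. Every node of $G$ is reachable from $A$, since states are by definition endpoints of derivations. I will prove the two inclusions of the claimed equality separately. The main tool is a single test thesis, together with the characterization of truth in Definition~\ref{def:truth}.

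\emph{Static interpretations lie in $S$.} I claim that $S$ itself, viewed as a set of states, is a truth.

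First I check the reachability condition. From any node $T_i$, follow edges toward strongly connected components that are lower in the condensation DAG. Because $G$ is finite, this walk reaches a node $T_j$ in some terminal strongly connected component.

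Next I check closure. Any $T_k$ reachable from $T_j$ lies in the same terminal component, because no edge leaves a terminal component. Hence $T_k \in S$.

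Therefore $S \in \mathcal{T}$, and every static interpretation, being in the intersection of all truths, lies in $S$.

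\emph{Every node of $S$ is a static interpretation.} Fix $T \in S$ in a terminal component $C$, and let $M$ be any truth. I must show $T \in M$.

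Apply Definition~\ref{def:truth} with $T_i = T$. This gives a state $T_j \in M$, reachable from $T$, such that every state reachable from $T_j$ is in $M$. Since $C$ is terminal, $T_j \in C$. Since $C$ is strongly connected, $T$ is reachable from $T_j$. Hence $T \in M$.

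As $M$ was an arbitrary truth, $T \in \bigcap_{M\in\mathcal{T}} M$, so $T$ is a static interpretation.

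The only step needing care is the existence of a terminal component reachable from each node, which is where finiteness is used. I will argue it via the acyclicity of the condensation graph: a finite DAG has a sink below every vertex. The remaining steps are direct unfoldings of the definitions. No appeal to Lemma~\ref{lemma:truth_conjunction} is needed, although it is consistent with the result, since the intersection of all truths here equals $S$, and $S$ is itself a truth.
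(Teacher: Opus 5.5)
Your proof is correct, but it follows a different route from the paper's.

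\textbf{The paper's route.} Its sketch goes through the Thesis Defense game. It first characterizes static interpretations, for an arbitrary derivation graph, as exactly the states $T$ at which the Opponent can force the play to end. If the Opponent cannot, then by determinacy the Defendant can steer to a state from which $T$ is unreachable, so the thesis ``every state except $T$'' is a truth omitting $T$. Only then does it invoke finiteness, to identify these forcible states with the nodes of terminal SCCs.

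\textbf{Your route.} You bypass the game and Theorem~\ref{theorem:defence_game} altogether and work directly from Definition~\ref{def:truth} with the single test thesis $S$. Showing that $S$ is a truth gives one inclusion. Strong connectivity of a sink component gives the other.

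\textbf{What yours buys.} It is more elementary, and it proves Observation~\ref{obs:finite-irreducible} along the way; the paper's proof of that observation repeats your first step and then cites this theorem. It also shows that finiteness is needed only for the inclusion of static interpretations in $S$, since your second inclusion holds in any derivation graph. Moreover, it uses finiteness in the precise form that \emph{some} terminal SCC is reachable from every node. The paper's remark that \emph{every} path reaches one is literally false: in the left example of Figure~\ref{fig:do-vs-ground}, repeatedly applying $w_b$ cycles forever between $\{a\}$ and $\{a,b\}$, which form a non-terminal component.

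\textbf{What the paper's buys.} Its characterization of static interpretations remains valid for infinite derivation graphs. There terminal SCCs need not be reachable and your thesis $S$ need not be a truth, as in the Left/Right program.
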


The finiteness assumption is essential: it ensures that every path in the derivation graph eventually reaches a terminal SCC. The proof follows from the observation that static interpretations are precisely the states $T$ for which the Opponent has a strategy to end the game on $T$.

These terminal SCCs are the analogue of \emph{attractors} in Boolean-network theory; the related notion of a \emph{trap space}, a region closed under the dynamics, captures the same persistence-under-derivation that DO expresses through the final $\mabox$ of $\bdb$. \citet{Trinh2025BooleanDatalog} establish this Boolean-network view of Datalog with negation (subset-minimal stable trap spaces correspond to regular models), while \citet{Trinh2025TrapSpace} develop a trap-space \emph{semantics} for normal logic programs.

\begin{definition}
\label{def:fixpoint}
A state of derivation $T \subseteq H$ is a \emph{fixpoint} if $w(T) = T$ for all rules $w \in W$.
\end{definition}

\begin{observation}
\label{obs:fixpoint-is-interpretation}
Any fixpoint is a static interpretation.
\end{observation}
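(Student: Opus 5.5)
The plan is to verify directly from Definition~\ref{def:static-interp} that a fixpoint $T$ lies in every truth $M$. A fixpoint is a state of derivation by Definition~\ref{def:fixpoint}, so $T$ is a vertex of the derivation graph $G$. The key structural fact is that its only outgoing edges are self-loops, since $w(T)=T$ for every $w\in W$. By induction on path length, the set of states reachable from $T$ (reflexive--transitive closure) is therefore exactly $\{T\}$.

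Now let $M$ be any truth. Apply Definition~\ref{def:truth} with the starting state $T_i := T$. This yields a derivation state $T_j \in M$ that is reachable from $T$ and such that every state reachable from $T_j$ is in $M$. Since the only state reachable from $T$ is $T$ itself, we get $T_j = T$, and hence $T \in M$. As $M$ was arbitrary, $T \in \bigcap_{M\in\mathcal{T}} M$, which is exactly the definition of a static interpretation.

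No step here is a real obstacle. The only point needing care is that Definition~\ref{def:truth} quantifies over \emph{every} derivation state as a starting point, so we may start at the fixpoint itself; we do not have to start from $A$. Equivalently, in game terms (Theorem~\ref{theorem:defence_game}), the Opponent moves to $T$ in Turn~1, after which neither player can leave $T$. The Defendant's win then forces $T\in M$. I would not invoke Theorem~\ref{theorem:static-sync}, since it assumes finiteness and the observation does not. Still, it gives a consistency check: in the finite case $\{T\}$ is a terminal strongly connected component.
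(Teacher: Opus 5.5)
Your proof is correct and follows the same idea the paper relies on. The paper gives no written proof, only the remark that fixpoints are effectively sink states and the game-theoretic hint that static interpretations are the states on which the Opponent can end the game. Your argument, instantiating Definition~\ref{def:truth} at the fixpoint itself and using that only $T$ is reachable from $T$, spells out exactly that reasoning without needing finiteness.
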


\noindent While fixpoints are effectively sink states, static
interpretations may not be fixpoints (but they are consistent with all
truths).  See Figure~\ref{fig:derivation} for examples of static
interpretations.

\begin{figure}[t]
\centering
\begin{subfigure}[b]{0.36\textwidth}
  \centering
  \includegraphics[width=\textwidth]{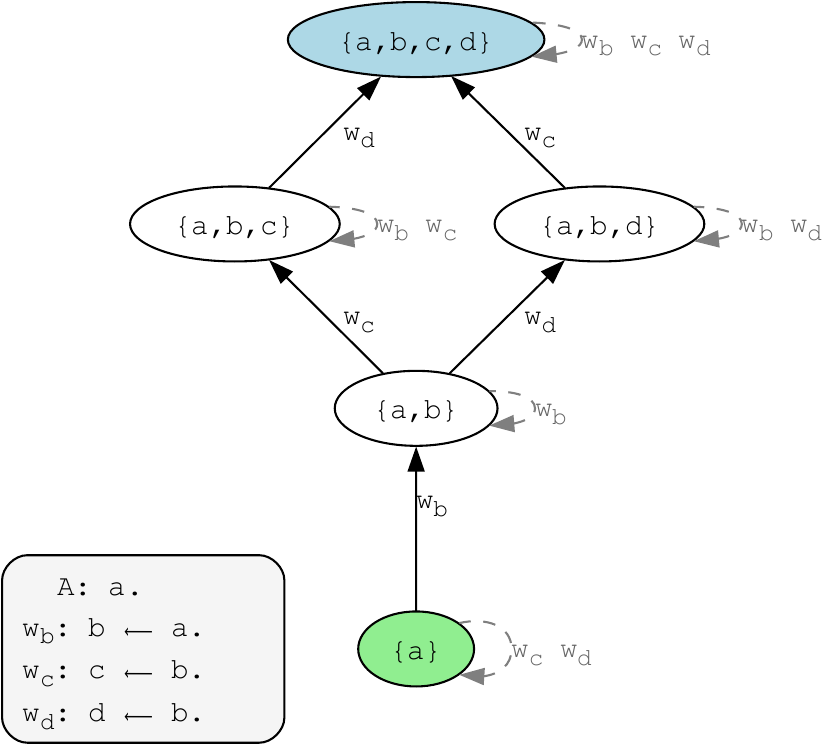}
  \label{fig:derivation-a}
\end{subfigure}
\hfill
\begin{subfigure}[b]{0.52\textwidth}
  \centering
  \includegraphics[width=\textwidth]{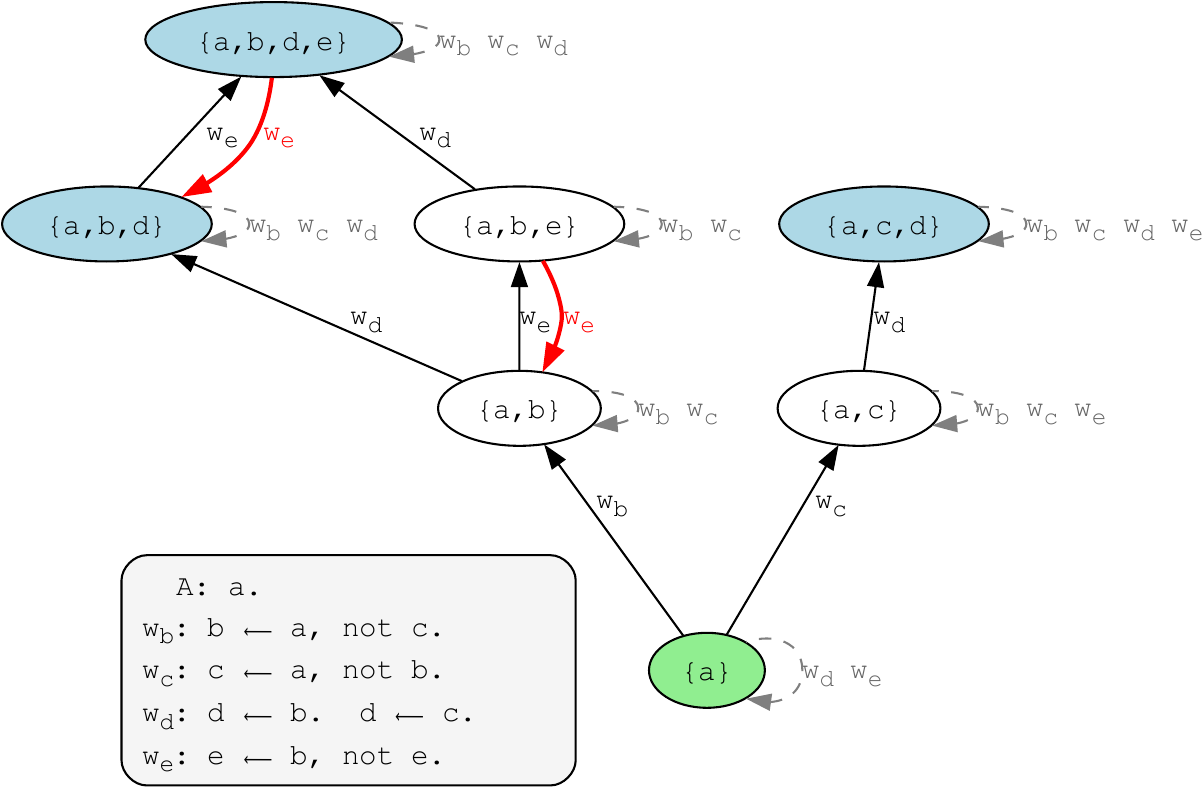}
  \label{fig:derivation-b}
\end{subfigure}
\caption{\small Two logic programs and their derivation graphs. 
Monotonic program (left): derivation flows upward to the unique fixpoint $\{a,b,c,d\}$.  Nonmonotonic program (right): $\mathtt{w_e}$ causes oscillation (red edges) between $\{a,b,d\}$ and
$\{a,b,d,e\}$; $\{a,c,d\}$ is a separate fixpoint.
Green nodes are axioms; blue nodes are static interpretations (sink SCC nodes). Edge labels identify rewrite rules; gray self-loops indicate rules that leave the state unchanged.}
\label{fig:derivation}
\end{figure}


\subsection{Dramatic Interpretations}

Static interpretations suffice when the system settles into a stable state. For programs like PageRank, where iteration converges \emph{toward} a value without reaching it, we need a dynamic counterpart.

\begin{definition}
\label{def:dramatic-interp}
A \emph{dramatic interpretation}\footnote{``Dramatic'' comes from the Ancient Greek word $\delta\rho\acute{\alpha}\omega$ referring to action, where a dramatic interpretation unfolds through action via the process of derivation itself.} of $(H, A, W)$ is an infinite sequence $T_1, T_2,\, \dots$ of derivation states such that for any thesis $M$ that holds, there is an $n$ where $T_k \in M$ for all $k > n$.
\end{definition}

A dramatic interpretation eventually enters and stays within every true thesis. Note that consecutive states $T_k, T_{k+1}$ need not be connected by a rewrite rule: a dramatic interpretation is any sequence of derivation states with this eventual-consistency property, not necessarily a path in the graph.

\begin{observation}\label{observe:static_in_dramatic}
If a dramatic interpretation $T_1, T_2,\,\dots$ eventually stabilizes (i.e., $T_k = T_n$ for all $k > n$), then $T_n$ is a static interpretation. Conversely, if $T$ is a static interpretation, then the constant sequence $T, T, T,\,\dots$ is a dramatic interpretation.
\end{observation}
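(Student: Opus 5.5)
Both directions of Observation~\ref{observe:static_in_dramatic} follow by unfolding Definitions~\ref{def:static-interp} and~\ref{def:dramatic-interp}; neither needs the graph structure beyond what those definitions already encode.

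\emph{First direction.} Suppose $T_1, T_2, \dots$ is a dramatic interpretation with $T_k = T_n$ for all $k > n$. We must show $T_n \in M$ for every truth $M \in \mathcal{T}$.
Fix such an $M$. By Definition~\ref{def:dramatic-interp} there is some $n_M$ with $T_k \in M$ for all $k > n_M$.
Take any $k > \max(n, n_M)$. Then $T_k \in M$, and also $T_k = T_n$, so $T_n \in M$.
Since $M$ was arbitrary, $T_n \in \bigcap_{M \in \mathcal{T}} M$.
$T_n$ is a derivation state because every term of a dramatic interpretation is one by definition. Hence $T_n$ is a static interpretation.

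\emph{Converse.} Let $T$ be a static interpretation. Then $T$ is a derivation state and $T \in M$ for every truth $M$.
In the constant sequence $T, T, T, \dots$, every term is a derivation state.
For any truth $M$ we may take $n = 0$, since every term equals $T \in M$.
So the constant sequence meets Definition~\ref{def:dramatic-interp}.

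\emph{Subtleties.} The only points needing care are:
\begin{itemize}
\item the threshold $n_M$ depends on $M$, which is why we pass to $\max(n, n_M)$ separately for each truth rather than using one uniform index;
\item membership in the set of derivation states must be checked explicitly in both directions, since both definitions require it.
\end{itemize}
Neither of these is a genuine obstacle.
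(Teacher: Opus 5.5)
Your proof is correct and is the direct unfolding of Definitions~\ref{def:static-interp} and~\ref{def:dramatic-interp}, which is all the paper relies on; it states this observation without a written proof. Choosing $k > \max(n, n_M)$ separately for each truth and checking that the terms are derivation states are exactly the details that such a proof needs.
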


Thus static interpretations are ``finished'' dramatic interpretations.

\subsection{$\omega$-Limit Interpretations}
\label{sec:omega}

For real-valued computation (PageRank, differential equations), we want derivations to converge topologically. This requires equipping the state space with a topology. The nondeterministic scheduling in DO semantics is related to asynchronous iteration~\citep{BertsekasTsitsiklis1989}, where convergence must hold regardless of update order.


\begin{definition}
\label{def:omega-limit}
Let $(H, A, W)$ be a nonmonotonic formal system and let $\mathcal{S} \subseteq 2^H$ be a set containing all states of derivation, equipped with a topology $\mathbb{T}$. A set $T \in \mathcal{S}$ is an \emph{$\omega$-limit interpretation} if every open set containing $T$ is a true thesis.
\end{definition}
For Logica programs with real-valued computation, we define a concrete topology.
\begin{definition}
\label{def:hausdorff}
Given a Herbrand base $H$ with arguments typed as strings, integers, or reals, we define a subset $\mathcal{S} \subseteq 2^H$ and a Hausdorff topology as follows.
Let $r$ be an injection from predicate symbols, strings, and integers into $\mathbb{N}$, extended to reals by the identity. Since argument positions do not mix types, $r$ is injective across all values in any given position. For a ground atom $P(v_1, \ldots, v_n)$, let $\rho(P(v_1, \ldots, v_n)) = (r(P), r(v_1), \ldots, r(v_n), 0, \ldots, 0) \in \mathbb{R}^{k}$, padded to dimension $k$ (the maximum predicate arity plus one).
The space $\mathcal{S}$ consists of all \emph{compact} subsets of $\mathbb{R}^k$, equipped with the Hausdorff distance:
$$d_H(T_1, T_2) = \max\left(\sup_{x \in T_1} \inf_{y \in T_2} d(\rho(x),\rho(y)), \sup_{y \in T_2} \inf_{x \in T_1} d(\rho(x),\rho(y))\right)$$
where $d$ is the Euclidean distance. The \emph{Hausdorff topology} on $\mathcal{S}$ is the topology induced by this metric.
\end{definition}

Note that per Definition~\ref{def:hausdorff} the
Herbrand base has arguments from $\mathbb{R}$ and is
thus of the cardinality of the continuum, while numbers used
in the definition of a logic program have only finitely many decimal digits. Thus, for instance, transcendental numbers (ones that are not roots of any
polynomial with rational coefficients)
never occur in states of derivation, because
they cannot be computed in a finite number of steps.
However,
transcendental numbers can occur in the $\omega$-limit, as we see
in the program converging to $\pi$ in Section~\ref{section:logica}.

All $\omega$-limit interpretations in this paper use the Hausdorff topology. States of derivation are finite, hence compact, hence elements of $\mathcal{S}$. Since predicate symbols map to distinct natural numbers, facts from different predicates remain at distance ${\ge}\,1$, so each predicate's convergence can be analyzed independently.

\begin{observation}
If an $\omega$-limit interpretation exists, then it is unique and every dramatic interpretation converges to it. (This follows from the Hausdorff separation property.)
\end{observation}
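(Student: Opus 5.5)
The observation has two parts: uniqueness of an $\omega$-limit interpretation, and convergence of every dramatic interpretation to it. Both follow from closure of truths under intersection (Lemma~\ref{lemma:truth_conjunction}) together with the Hausdorff property of the metric topology of Definition~\ref{def:hausdorff}.

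\emph{Uniqueness.} Suppose $T \neq T'$ are both $\omega$-limit interpretations. Since $d_H$ is a metric on compact sets, the topology is Hausdorff. Choose disjoint open balls $U \ni T$ and $U' \ni T'$, for instance of radius $d_H(T,T')/2$. By Definition~\ref{def:omega-limit}, both $U$ and $U'$ are true theses. By Lemma~\ref{lemma:truth_conjunction}, $U \cap U' = \emptyset$ is then a truth. But Definition~\ref{def:truth} requires a derivation state lying in the thesis, reachable from $A$, and the empty set contains no state. This contradiction gives $T = T'$.

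\emph{Convergence.} Let $T$ be the $\omega$-limit interpretation and $T_1, T_2, \dots$ any dramatic interpretation. Take any open neighbourhood $U$ of $T$. By Definition~\ref{def:omega-limit}, $U$ is a true thesis. By Definition~\ref{def:dramatic-interp}, there is an $n$ with $T_k \in U$ for all $k>n$. Since $U$ was an arbitrary open set containing $T$, this is exactly topological convergence $T_k \to T$. No Hausdorff property is needed here; it only makes the limit unique, which agrees with the first part.

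\emph{Main obstacle.} The only delicate point is a bookkeeping one: theses are subsets of $2^H$, while open sets live in $\mathcal{S}$, the compact subsets of $\mathbb{R}^k$. I would identify each open set $U \subseteq \mathcal{S}$ with the thesis of derivation states whose image under $\rho$ lies in $U$. Derivation states are finite, hence compact, so they are elements of $\mathcal{S}$. This identification makes ``open set is a true thesis'' meaningful and preserves intersections, so Lemma~\ref{lemma:truth_conjunction} applies verbatim.
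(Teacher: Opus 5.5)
Your proposal is correct and is essentially the argument the paper compresses into its parenthetical. Hausdorff separation gives disjoint open neighbourhoods of two putative $\omega$-limits; both would be truths, so Lemma~\ref{lemma:truth_conjunction} would make the empty thesis a truth, which is the same device the paper uses to show no formula is both a theorem and an anti-theorem. Convergence of dramatic interpretations is then just Definition~\ref{def:dramatic-interp} applied to each open neighbourhood. Routing uniqueness through Lemma~\ref{lemma:truth_conjunction}, rather than through uniqueness of sequential limits, is the right choice because it does not require a dramatic interpretation to exist. Your $\rho$-preimage identification also correctly repairs the paper's type mismatch between theses in $2^H$ and open sets in $\mathcal{S}$.
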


\begin{theorem}\label{thm:pagerank}
The PageRank program from Section~\ref{section:logica} has an $\omega$-limit interpretation, which assigns to each page its true PageRank value, i.e., the eigenvector of the link matrix.
\end{theorem}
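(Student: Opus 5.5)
We model the PageRank program as a nonmonotonic formal system in the sense of Definition~\ref{def:logica-system}. There is only one intensional predicate, so $W=\{w_{\mathit{PR}}\}$ and the derivation graph is a single path $A=T_0, T_1, T_2,\dots$. Here $T_{n+1}=w_{\mathit{PR}}(T_n)$ replaces all \texttt{PageRank} atoms by the values $p^{(n+1)} = F(p^{(n)})$, where
\[
F(p)=\tfrac{\alpha}{N}\mathbf{1}+(1-\alpha)L^{\top}p ,
\]
$\alpha=\texttt{ResetProb()}$, and $L$ is the row-normalized link matrix. In $T_0$ no \texttt{PageRank} atom is present; in $T_1$ each page receives $\alpha/N$, because the second rule contributes nothing against an empty \texttt{PageRank} relation, and the \texttt{Sum} over an empty body does not fire.

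The plan has three steps.

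\textbf{Step 1 (convergence).} We show that $p^{(n)}\to p^\ast$, where $p^\ast$ is the unique fixpoint of $F$, namely the PageRank eigenvector. Since $L$ is row-stochastic (for pages with positive out-degree), $L^{\top}$ is non-expansive in $\ell_1$. Hence $F$ is a contraction with factor $1-\alpha<1$, and Banach's fixpoint theorem gives both the limit and the geometric rate $\|p^{(n)}-p^\ast\|_1\le C(1-\alpha)^n$. Dangling pages would need the usual convention, which I would fix by assuming every page has an out-link, matching the program's use of \texttt{Degree}.

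\textbf{Step 2 (lifting to Hausdorff convergence).} Define $T^\ast$ as the EDB facts together with the atoms $\mathit{PageRank}(x,p^\ast_x)$. This set is finite, hence compact, so $T^\ast\in\mathcal{S}$. For $n\ge1$, $T_n$ and $T^\ast$ agree on all non-\texttt{PageRank} atoms and have the same set of \texttt{PageRank} keys. Since $r$ is the identity on reals, each atom $\rho(\mathit{PageRank}(x,p^{(n)}_x))$ lies at Euclidean distance $|p^{(n)}_x-p^\ast_x|$ from its partner. Matching atoms pairwise therefore gives
\[
d_H(T_n,T^\ast)\le\max_x|p^{(n)}_x-p^\ast_x|\to0 .
\]

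\textbf{Step 3 (the $\omega$-limit condition).} Let $U$ be an open set containing $T^\ast$. Then $U$ contains a ball $B_\varepsilon(T^\ast)$, and by Step 2 there is an $n_0$ such that $T_n\in U$ for all $n\ge n_0$. We verify Definition~\ref{def:truth} for $U$. The states of derivation are exactly the $T_i$. From any $T_i$ we move to $T_j$ with $j=\max(i,n_0)$, which is reachable from $T_i$. Every state reachable from $T_j$ is some $T_k$ with $k\ge j\ge n_0$, hence lies in $U$. So $U$ is a true thesis, and $T^\ast$ is an $\omega$-limit interpretation by Definition~\ref{def:omega-limit}. Its uniqueness is given by the preceding observation, and by construction it assigns each page its eigenvector value.

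The main obstacle is Step 1, specifically pinning down the operational semantics exactly: the empty initial state, the behaviour of \texttt{+=} when no body instance fires, and dangling nodes. Without these it is not clear that the iteration is exactly the affine map $F$, and the contraction argument depends on that. The topological part (Steps 2 and 3) is routine, because the derivation graph is a single deterministic path, so the O--D--O game collapses to "eventually always in $U$".
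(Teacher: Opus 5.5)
Your proof is correct. Its skeleton matches the paper's: iterate $w_{\mathit{PR}}$ until the state lies in the neighborhood, then note that the unique continuation never leaves it. Your convergence lemma, however, is different. The paper identifies $w_{\mathit{PR}}$ with multiplication by a positive column-stochastic transition matrix and appeals to Perron--Frobenius and the power method. You instead treat $w_{\mathit{PR}}$ as the affine map $F$, show it is an $\ell_1$-contraction with factor $1-\alpha$, and apply Banach.

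Your model is the more faithful one. From the empty \texttt{PageRank} relation the iterates $F^n(0)$ have total mass $1-(1-\alpha)^n$, so they are not probability vectors. The affine update coincides with multiplication by the stochastic matrix only on probability vectors. Your argument therefore covers the actual derivation from $T_0$ directly, and it comes with an explicit geometric rate.

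Your Steps~2--3 also make explicit two things the paper leaves implicit. First, convergence of the vector lifts to convergence in the Hausdorff metric on states. Second, staying inside an arbitrary open $U$ follows because the whole tail eventually lies in a metric ball contained in $U$. The paper's ``they only get closer to $v^*$'' is not by itself enough for an arbitrary open set.

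What the paper's route gives for free is the eigenvector characterization demanded by the statement. In yours, ``namely the PageRank eigenvector'' needs one more line. Sum the fixpoint equation $p^\ast=\frac{\alpha}{N}\mathbf{1}+(1-\alpha)L^{\top}p^\ast$ and use $\mathbf{1}^{\top}L^{\top}=\mathbf{1}^{\top}$. This gives $\mathbf{1}^{\top}p^\ast=1$, hence $p^\ast=\bigl(\frac{\alpha}{N}\mathbf{1}\mathbf{1}^{\top}+(1-\alpha)L^{\top}\bigr)p^\ast$, so $p^\ast$ is the normalized eigenvector for eigenvalue~$1$.
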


\begin{proof}
Applying $w_{PageRank}$ corresponds to multiplying the current PageRank vector by the transition matrix $M = (1-d)\frac{1}{n}\mathbf{1} + d \cdot L$, where $d$ is the damping factor and $L$ the column-normalized link matrix (assuming no dangling nodes, or with the standard dangling-mass correction, so that $L$ is column-stochastic). Since $M$ is stochastic with all positive entries, the Perron-Frobenius theorem~\citep{meyer2000matrix} gives a unique eigenvector $v^*$ with eigenvalue 1, and the power method converges to $v^*$ from any starting distribution.

For any open neighborhood $U$ of $v^*$, the Defendant can apply $w_{PageRank}$ sufficiently many times to enter $U$, and all subsequent iterations remain within $U$ (they only get closer to $v^*$). Thus every neighborhood of $v^*$ is a true thesis, making $v^*$ an $\omega$-limit interpretation.
\end{proof}

By a similar argument, the $\pi$-approximation program has an $\omega$-limit interpretation with $\mathtt{Pi()} = \pi$ (the Leibniz series converges, so the Defendant can increase $L$ until the partial sum is within any $\varepsilon$ of~$\pi$).

\begin{theorem}\label{thm:sincos}
The differential equation program for sine and cosine, restricted to $t \leq 2\pi$, has an $\omega$-limit interpretation where $\mathtt{S}$ and $\mathtt{C}$ equal the sine and cosine functions on $[0, 2\pi]$.
\end{theorem}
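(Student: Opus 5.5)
The plan mirrors the proof of Theorem~\ref{thm:pagerank}: identify the candidate limit state, then show that every Hausdorff neighbourhood of it is a true thesis by exhibiting a Defendant strategy in the Thesis Defense game (Theorem~\ref{theorem:defence_game}).

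\textbf{Candidate limit.} Let $T^*$ be the compact set consisting of
\begin{itemize}
\item the atoms $\mathtt{S}(t,\sin t)$ and $\mathtt{C}(t,\cos t)$ for $t\in[0,2\pi]$, that is, the graphs of sine and cosine embedded via $\rho$;
\item the limiting values of the auxiliary predicates.
\end{itemize}
The limit of $\mathtt{L}$ is the delicate point. $\mathtt{L}$ grows without bound, so it has no finite limit. I would handle this in one of two ways. The first is to restrict the thesis to the $\mathtt{S}$ and $\mathtt{C}$ components, using the remark after Definition~\ref{def:hausdorff} that predicates are at mutual distance $\ge 1$ and so can be analysed independently. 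The second is to treat $\mathtt{Infinitesimal}()\to 0$ as the relevant limit. I will commit to the reading that the theorem concerns the $\mathtt{S}$ and $\mathtt{C}$ components.

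\textbf{Key steps.}
\begin{enumerate}
\item Describe the reachable states. For a fixed $L$ with step $h=1/L$, enough alternating applications of $w_S$ and $w_C$ produce the forward-Euler iterates $(S_k,C_k)$ at grid points $t_k=kh$ up to $2\pi$. These satisfy $(S_{k+1},C_{k+1}) = A_h (S_k,C_k)$, where $A_h$ is the rotation matrix scaled by $\sqrt{1+h^2}$.
\item Control the error. On $[0,2\pi]$ we have $(1+h^2)^{k/2}\le e^{\pi h}\to 1$, and the standard global error bound for Euler's method is $O(h)$. Together with grid density $h$, this puts the Euler graph within Hausdorff distance $O(h)$ of the true graphs.
\item Build the Defendant's strategy. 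Given an $\varepsilon$-ball $U$ around $T^*$ and any Opponent state, the Defendant increments $\mathtt{L}$ until $h$ is small. It then recomputes $\mathtt{S}$ and $\mathtt{C}$ from $\mathtt{S}(0)$ and $\mathtt{C}(0)$ onward until the whole grid is filled, reaching a state inside $U$.
\end{enumerate}

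\textbf{Main obstacle: the final $\mabox$.} Once in $U$, every Opponent continuation must stay in $U$. The Opponent controls the schedule, so three things can go wrong:
\begin{itemize}
\item It may apply $w_L$ again. The grid then changes, and partially recomputed states mix old atoms at grid spacing $h$ with new atoms at spacing $h'<h$.
\item Snapshot semantics means $w_S$ recomputes all of $\mathtt{S}$ from the current $\mathtt{C}$. A partially built new grid therefore temporarily drops points.
\item Mismatched $t$ values from different grids can spawn off-grid atoms.
\end{itemize}
My plan is to prove an invariant that holds in every state reachable after the Defendant's move: every $\mathtt{S}$ or $\mathtt{C}$ atom present is an Euler iterate for some step $h''\le h$, evaluated at a point of its own grid. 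Such an atom lies within $O(h)$ of the true curve, which bounds the first half of the Hausdorff distance. For the other half, I need that every $t\in[0,2\pi]$ stays near some present atom. The base atoms $\mathtt{S}(0)$ and $\mathtt{C}(0)$ are always rederived, and each recomputation extends coverage from them. I expect this second half to be the hardest step. The fallback, if coverage can fail transiently, is to have the Defendant choose $h$ so small that even a partially recomputed graph is $\varepsilon$-dense. Alternatively, one can argue that atoms from the previous grid persist until overwritten, since $w_S$ is recomputed from a $\mathtt{C}$ that still covers the interval. Both routes rely on $t\le 2\pi$ keeping the growth factor $e^{\pi h}$ uniformly bounded.
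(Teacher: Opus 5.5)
Your outline is the paper's: raise $\mathtt{L}$, sweep $w_S,w_C$ to fill the grid, then argue the final $\mabox$. The step you leave open, coverage of $[0,2\pi]$ after the Defendant's move, is a genuine gap, and neither repair you suggest closes it.

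The $\mathtt{S}$ rule produces an atom at $t+dt$ only if $\mathtt{S}(t)$ and $\mathtt{C}(t)$ are both present at exactly the same $t$, and $w_S$ deletes every previous $\mathtt{S}$ atom. Hence from \emph{any} state the Opponent can do the following:
\begin{enumerate}
\item apply $w_L$ until $h'=1/L'$ is not a difference of two points of the finite set $\mathrm{dom}\,\mathtt{C}$;
\item apply $w_{\mathtt{Infinitesimal}}$;
\item apply $w_S$ twice.
\end{enumerate}
The first $w_S$ puts $\mathtt{S}$ on $\{0\}\cup\bigl((\mathrm{dom}\,\mathtt{S}\cap\mathrm{dom}\,\mathtt{C})+h'\bigr)$. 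This set meets $\mathrm{dom}\,\mathtt{C}$ at most in $\{0\}$, so the second $w_S$ leaves $\mathtt{S}$ defined on at most $\{0,h'\}$. On the $\mathtt{S}$ component this state is at Hausdorff distance at least $2\pi-h'$ from $T^*$, so no small ball around $T^*$ is a true thesis.

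A smaller $h$ does not help. What is lost is the range of $t$, not density, and $h'$ is chosen by the Opponent. Nor does it help that $\mathtt{C}$ still covers the interval after the first $w_S$: once the domains of $\mathtt{S}$ and $\mathtt{C}$ stop coinciding, the recursion restarts from $t=0$.

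Your invariant also fails as stated. Atoms such as $\mathtt{S}(kh+h')=S_k+C_kh'$ mix step sizes and lie on no single grid. A weaker invariant does give the first half of the Hausdorff bound. Every atom present after the Defendant's move arises from Euler-type steps of size at most $h$ applied to a matching pair $\mathtt{S}(t),\mathtt{C}(t)$, and induction with $E(\tau)=\tfrac{h}{2}(e^{\tau}-1)$ bounds its error.

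The second half, however, cannot be rescued for the program as written. The paper's proof settles the final $\mabox$ by asserting that $w_S,w_C$ are idempotent and that $w_L$ only improves accuracy. That overlooks exactly the transient you identified, because $w_L$ followed by $w_{\mathtt{Infinitesimal}}$ changes a dependency of $w_S$ and $w_C$. Following the paper will therefore not finish your argument. A provable version needs a modified program or semantics in which changing the step size cannot destroy a completed approximation before its replacement is complete.

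Your restriction to the $\mathtt{S},\mathtt{C}$ components is likewise a real change of the statement, since the $\mathtt{L}$ atom drifts to infinity in the metric. The paper makes the same change tacitly.
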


\begin{proof}
The program implements Euler's method for $S' = C$, $C' = -S$ with $S(0) = 0$, $C(0) = 1$, whose unique solution is $S = \sin$, $C = \cos$. Each rewriting rule is idempotent when its dependencies are unchanged, so no progress is lost by repeated application.

Let $U$ be any open neighborhood of $(\sin, \cos)$ on $[0, 2\pi]$. Euler's method converges uniformly on compact sets~\citep{atkinson2009numerical}: for any $\varepsilon > 0$, there exists $L_0$ such that step size $1/L_0$ stays within $\varepsilon$ of the true solution.

The Defendant first applies $w_L$ until $L = L_0$ (choosing $\varepsilon$ so the $\varepsilon$-neighborhood lies inside~$U$), then alternates $w_S$ and $w_C$ to compute the full solution on $\{0, 1/L_0, 2/L_0, \ldots\}$, reaching a state within~$U$.

On the Opponent's final turn: applying $w_S$ or $w_C$ is idempotent (dependencies unchanged); applying $w_L$ increments $L$ to $L' > L_0$, which only \emph{improves} accuracy, so the state remains within~$U$. Thus every open neighborhood of $(\sin,\cos)$ is a true thesis.
\end{proof}

\subsection{Irreducible Truth}
\label{app:irreducible-truth}

In general, static interpretations do not capture the full
structure of truth in a nonmonotonic formal system.
As discussed, a meaningful program (for example PageRank)
may have no static interpretation. But even when a static interpretation exists, a distinct dramatic
interpretation may exist alongside. The program below illustrates this:
\begin{lstlisting}[style=logica,basicstyle=\small\ttfamily,columns=fullflexible]
    Left() :- not Right();
    Right() :- not Left();
    X() = 1 :- not X(), Right();
    X() = -1 :- not X(), Left();
    X() = X() + 1 :- X() > 0;
    X() = X() :- X() < 0;
\end{lstlisting}
If $\mathtt{Left}$ holds, then $\mathtt{X}$ stabilizes at $-1$, yielding a static interpretation $\{\mathtt{Left}, \mathtt{X}(-1)\}$.
If $\mathtt{Right}$ holds, then $\mathtt{X}$ grows without bound ($1, 2, 3, \ldots$) and no fixpoint is reached.
Since derivation states in the $\mathtt{Right}$ branch cannot reach $\{\mathtt{Left}, \mathtt{X}(-1)\}$, the set of static interpretations is not reachable from every state and thus is not a truth.

\begin{figure}[t]
\centering
\includegraphics[width=0.33\textwidth]{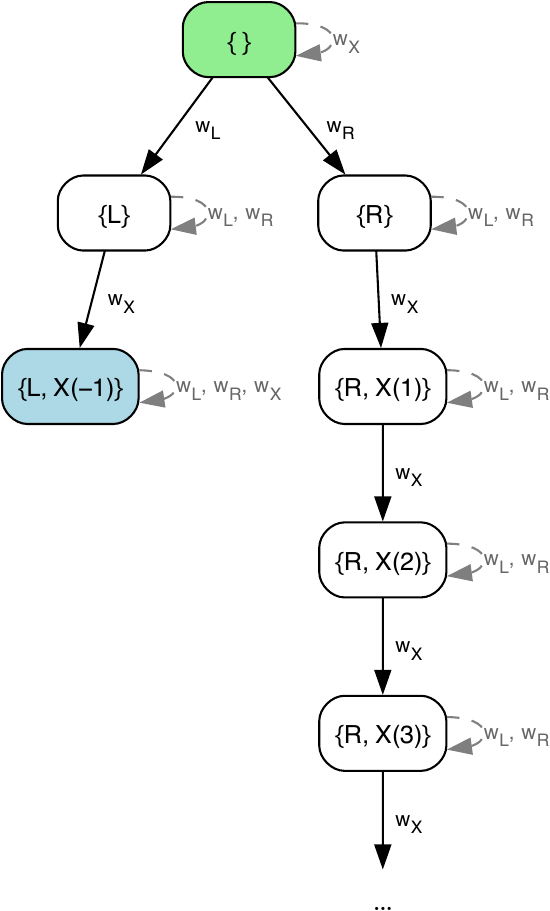}
\caption{\small Derivation graph for the Left/Right program (L and R
  abbreviate Left and Right).  Green: axiom; blue: sole static
  interpretation (fixpoint).  The left branch stabilizes; the right
  diverges ($X$ grows without bound).  Since the blue node is
  unreachable from the right branch, no irreducible truth exists.
  Dashed gray: self-loops.}
\label{fig:irreducible}
\end{figure}

For finite systems, however, static interpretations do
characterize all truths. We formalize this via the notion
of irreducible truth.

\begin{definition}
\label{def:irreducible-truth}
Let $(H, A, W)$ be a nonmonotonic formal system.
If the set of all static interpretations is itself a truth,
then it is called the
\emph{irreducible truth} of the system.
\end{definition}

\begin{observation}
\label{obs:irreducible-characterizes}
If the irreducible truth exists, then a thesis holds
if and only if it contains the irreducible truth,
i.e., every static interpretation belongs to the thesis.
\end{observation}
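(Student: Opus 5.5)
Write $I$ for the irreducible truth, i.e.\ the set of all static interpretations, and assume it is a truth. There are two directions.

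\emph{Every thesis that holds contains $I$.} This is immediate from Definition~\ref{def:static-interp}. Let $M$ be a truth and $T\in I$. Then $T$ is a static interpretation, so $T \in \bigcap_{M' \in \mathcal{T}} M'$. Since $M \in \mathcal{T}$, we get $T\in M$, hence $I\subseteq M$.

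\emph{Every thesis $M\supseteq I$ holds.} The plan is to show that truths are upward closed, and then apply this to $I$.

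Let $M_1$ be a truth and $M_1\subseteq M_2$. Fix any derivation state $T_i$. By Definition~\ref{def:truth} there is a state $T_j\in M_1$ reachable from $T_i$ such that every $T_k$ reachable from $T_j$ lies in $M_1$. Then $T_j \in M_2$, and every such $T_k$ also lies in $M_2$. The same witness $T_j$ therefore works for $M_2$, so $M_2$ holds. In game terms, the Defendant's winning strategy for $M_1$ also wins for $M_2$, by Theorem~\ref{theorem:defence_game}.

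Since $I$ is a truth by hypothesis, every $M \supseteq I$ is a truth.

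The only subtlety is that a thesis may contain sets that are not derivation states. Definition~\ref{def:truth} only quantifies over derivation states, so such extra members are harmless, and inclusion is the right notion. There is no real obstacle: the substantive content, that $I$ is a truth, is an assumption here. It would only become difficult when establishing existence, as in finite systems via Theorem~\ref{theorem:static-sync}, which this statement does not require.
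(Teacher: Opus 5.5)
Your proof is correct and follows essentially the same route as the paper's: the ``holds $\Rightarrow$ contains'' direction comes directly from the definition of a static interpretation as a member of every truth. The converse comes from upward closure of truths, reusing the same witness state $T_j$, applied to the irreducible truth.
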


\begin{proof}
Any thesis that contains a truth is itself a truth:
if from every state one can reach and stay in $M$,
then one can also reach and stay in any superset of $M$.
Thus any thesis containing the irreducible truth holds.
Conversely, if a thesis holds, it must contain every
static interpretation: a static interpretation belongs
to every truth, hence to every thesis that holds.
Therefore it contains the irreducible truth.
\end{proof}

By definition, whenever the irreducible truth exists, any static interpretation is an element of the irreducible truth. The following observation shows that dramatic interpretations are also characterized by the irreducible truth.

\begin{observation}
    If the irreducible truth exists, then
    dramatic interpretations are precisely sequences of
    derivation states $T_1, \dots, T_k, \dots$
    for which there exists $n$ such that for any $k > n$
    we have $T_k$ being an element of the irreducible truth.
\end{observation}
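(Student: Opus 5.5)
Let $I$ denote the irreducible truth, i.e.\ the set of all static interpretations, which by assumption is itself a truth. I will prove the two inclusions directly from Definition~\ref{def:dramatic-interp}, using Observation~\ref{obs:irreducible-characterizes} as the only substantive ingredient.

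\emph{Dramatic interpretations are eventually in $I$.} Let $T_1, T_2, \dots$ be a dramatic interpretation. Since $I$ is a thesis that holds, the definition of a dramatic interpretation applied to $M = I$ directly supplies an $n$ with $T_k \in I$ for all $k > n$. No further work is needed here.

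\emph{Eventually-in-$I$ sequences are dramatic interpretations.} Let $T_1, T_2, \dots$ be a sequence of derivation states with $T_k \in I$ for all $k > n$, and let $M$ be any thesis that holds. By Observation~\ref{obs:irreducible-characterizes}, $M \supseteq I$. It follows that $T_k \in M$ for all $k > n$, and the same $n$ works uniformly for every true thesis. The terms $T_1, \dots, T_n$ are unconstrained, but they are required to be derivation states, which the hypothesis already grants. Moreover, Definition~\ref{def:dramatic-interp} does not require consecutive states to be joined by edges of the derivation graph, so no path condition has to be checked.

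The only delicate point is the direction ``true thesis $\Rightarrow$ contains $I$.'' This rests on the fact that every static interpretation lies in every truth, which is exactly Definition~\ref{def:static-interp}, packaged in Observation~\ref{obs:irreducible-characterizes}. Everything else is unfolding definitions, so I expect no real obstacle beyond citing that observation correctly.
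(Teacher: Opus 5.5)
Your proof is correct and follows the paper's argument. The forward direction applies the definition of a dramatic interpretation to the irreducible truth itself. The converse uses the fact that every static interpretation lies in every truth; you reach this through Observation~\ref{obs:irreducible-characterizes}, whereas the paper states it directly as ``a tail of elements of the irreducible truth is consistent with every truth.''
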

\begin{proof}
    Consider a dramatic interpretation $T_1, \dots, T_k, \dots$.
    Since the irreducible truth is a truth, by definition of a
    dramatic interpretation there exists $n$ such that
    states $T_{n+1}, \dots$ belong to the irreducible truth.
    The converse also holds because a tail of elements of the
    irreducible truth is consistent with every truth.
\end{proof}

\begin{observation}
\label{obs:finite-irreducible}
If the derivation graph is finite, then the irreducible
truth exists and equals the set of nodes in the terminal
(sink) strongly connected components.
\end{observation}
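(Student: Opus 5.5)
By Theorem~\ref{theorem:static-sync}, for a finite derivation graph the set $S$ of all static interpretations is exactly the set of nodes lying in terminal (sink) strongly connected components. So it suffices to show that $S$, viewed as a thesis, holds in the sense of Definition~\ref{def:truth}. By Definition~\ref{def:irreducible-truth} it is then the irreducible truth, and the claimed equality with the sink-SCC nodes is immediate.

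To check that $S$ holds, we verify the two conditions of Definition~\ref{def:truth}. Fix an arbitrary derivation state $T_i$.

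\emph{Reachability.} Consider the condensation of the derivation graph, i.e.\ the DAG of its strongly connected components. Since the graph is finite, this DAG is finite and acyclic. Hence from the component of $T_i$ we can follow edges until we reach a component with no outgoing edges, that is, a terminal SCC $C$. Pick any $T_j \in C$. Then $T_j$ is reachable from $T_i$ and $T_j \in S$.

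\emph{Persistence.} Let $T_k$ be reachable from $T_j$. Because $C$ is terminal, every edge leaving a node of $C$ stays inside $C$. By induction on the length of the path from $T_j$ to $T_k$, we get $T_k \in C \subseteq S$; the reflexive case $T_k = T_j$ is trivial.

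Together these two facts show that $S$ is a truth, which completes the argument.

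The only delicate point is the existence of a terminal SCC reachable from every node. This genuinely needs finiteness: Figure~\ref{fig:irreducible} shows it fails for an infinite graph. The finite-DAG argument (every maximal path in a finite DAG ends at a sink) handles it directly.

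Alternatively, one could avoid invoking Theorem~\ref{theorem:static-sync} and argue directly that every truth contains all sink-SCC nodes. Given a truth $M$ and a sink-SCC node $T$, the stable state promised by $M$ from $T$ lies in the same SCC, so $T$ is reachable from it and hence $T \in M$. But relying on the earlier theorem is cleaner.
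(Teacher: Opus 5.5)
Your proposal is correct and follows essentially the same route as the paper: show that the set of terminal-SCC nodes is a truth (a terminal SCC is reachable from every state, and further derivation never leaves it), then identify this set with the static interpretations via Theorem~\ref{theorem:static-sync}. Your condensation-DAG argument for reachability is slightly more precise than the paper's claim that ``every path eventually reaches a terminal SCC,'' since only the existence of such a path is needed.
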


\begin{proof}
In a finite derivation graph, every path eventually
reaches a terminal SCC. Thus from any state, there
exists a path to a node in a terminal SCC. Once in a
terminal SCC, all further states remain within terminal
SCCs, since there are no outgoing edges to non-terminal
nodes. Therefore the set of terminal SCC nodes
is a truth.
By Theorem~\ref{theorem:static-sync}, this set is
exactly the set of static interpretations,
so it is the irreducible truth.
\end{proof}


\section{Connections to Classical Nonmonotonic Semantics}
\label{sec:connections}

We now examine how DO semantics relates to Well-Founded Semantics (WFS) and Stable Model Semantics (SMS).
The relationship is one of compatibility rather than equivalence:
DO semantics partially aligns with WFS and SMS, but differs on programs that fall outside their intended scope.
This asymmetry is by design.
WFS and SMS are defined for programs that reach fixpoints,
while DO semantics also assigns meaning to programs
that converge without reaching one,
such as PageRank, numerical approximation of $\pi$,
and differential equations solved via Euler's method
(Section~\ref{sec:omega}).

\begin{figure}[t]
\centering
\begin{subfigure}[c]{0.22\textwidth}
  \centering
  \includegraphics[width=\textwidth]{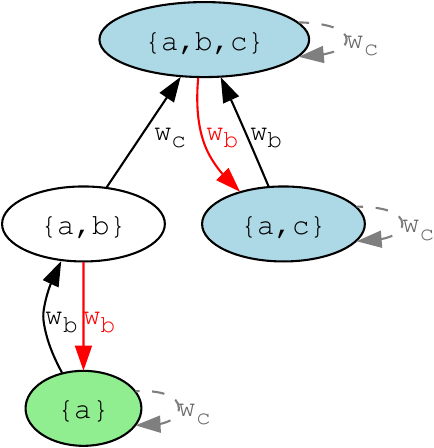}
  \label{fig:counter-wfs}
\end{subfigure}
\hfill
\begin{subfigure}[c]{0.75\textwidth}
  \centering
  \includegraphics[width=\textwidth]{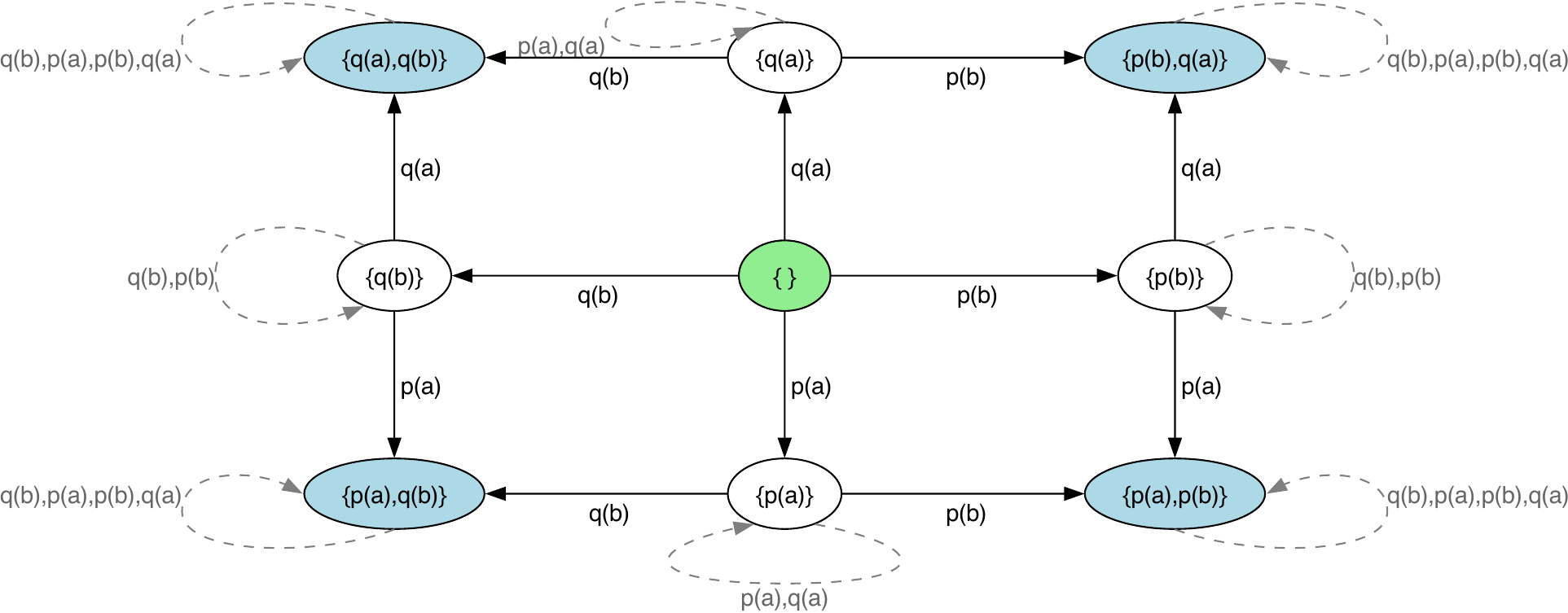}
  \label{fig:ground-nmfs}
\end{subfigure}
\caption{\small Left: derivation graph for the program with rules
  $\mathtt{w_b{:}\,\{b \leftarrow a, \neg\, b\}}$ and
  $\mathtt{w_c{:}\, \{ c \leftarrow b; ~ c \leftarrow c\}}$.  Right: Ground
  nonmonotonic system with rules
  $\mathtt{w_{p(a)}{:}\, \{p(a) \leftarrow \neg q(a)\}}$;
  $\mathtt{w_{p(b)}{:}\, \{p(b) \leftarrow \neg q(b)\}}$;
  $\mathtt{w_{q(a)}{:}\, \{q(a) \leftarrow \neg p(a)\}}$; and
  $\mathtt{w_{q(b)}{:}\, \{ q(b) \leftarrow \neg p(b)\}}$.  In both figures,
  green nodes represent axioms ($\mathtt{\{a\}}$ and $\emptyset$,
  respectively); blue nodes are static interpretations; dashed gray loops
  are rule applications that leave the state unchanged; red edges mark
  oscillation (caused by $\mathtt{w_b}$ on the left).
}
\label{fig:do-vs-ground}
\end{figure}

The program and its derivation graph shown on the left in Figure~\ref{fig:do-vs-ground} illustrate a key distinction of DO semantics.
The rule $b \leftarrow a, \neg\, b$ is self-defeating:
applying $\mathtt{w_b}$ alternately introduces and removes $b$.
However, when $b$ is present, $c \leftarrow b$ fires,
and once derived, $c$ persists via $c \leftarrow c$.
WFS declares only $a$ definite and leaves $b$ and $c$ undefined.
SMS finds no stable model at all, since $b$'s self-defeating
nature prevents any self-consistent fixpoint.
DO semantics, by contrast, identifies $\{a,c\}$ and $\{a,b,c\}$
as static interpretations and recognizes $c$ as a theorem:
from any state, derivation has a chance to reach the sink SCC
where $c$ is present in every state.
This willingness to let rules fire and retain what stabilizes
is precisely what allows DO semantics to give meaning
to convergent computations like PageRank.

Logica's semantics (Definition~\ref{def:logica-system}) uses
per-predicate rewrite rules, while established logic programming
semantics are defined on ground programs.
To connect the two, we define the ground nonmonotonic system
of a logic program.

\begin{definition}\label{def:ground-do}
Given a normal logic program~$P$, the {\em ground nonmonotonic system}
of~$P$ is the
nonmonotonic formal system $(H, A, W_{\mathrm{gr}})$ where
$H$ is the Herbrand base of~$P$,
$A$ is the set of facts of~$P$,
and $W_{\mathrm{gr}}$ contains one rewrite rule $w_{P(\vec{v})}$
per ground atom $P(\vec{v}) \in H$.
Rule $w_{P(\vec{v})}$ ensures that $P(\vec{v})$ belongs to the resulting state
iff some rule with head $P(\vec{v})$ has its body satisfied
in the current state, leaving all other atoms unchanged.
\end{definition}

The ground nonmonotonic system is not a separate semantics: it instantiates the same nonmonotonic formal system with finer-grained, per-ground-atom rewrite rules, serving as a bridge to the ground structure assumed by WFS and SMS.

On the right of Figure~\ref{fig:do-vs-ground} we show the
derivation graph of a ground nonmonotonic system for a program
with two predicates \verb|p| and \verb|q|.
The ground system has four static interpretations (blue nodes),
including mixed states such as $\{p(a), q(b)\}$ that are
unreachable under per-predicate rewrite rules.

\subsection{Well-Founded Semantics}

WFS~\citep{van1991well} assigns each ground atom to \emph{true},
\emph{false}, or \emph{undefined} via an alternating fixpoint
construction~\citep{vanGelder1993wfs}. The key observation is that
the result of this construction can be reproduced as a derivation
strategy within DO semantics, reaching a state $T_\mathit{WFS}$
containing exactly the WFS-true atoms.
From $T_\mathit{WFS}$, no further rule application can derive
WFS-false atoms. In modal terms, $T_\mathit{WFS}$ satisfies
$\madiamond\mabox\, t$ for every WFS-true atom $t$ and
$\madiamond\mabox\, \neg f$ for every WFS-false atom $f$.

\mypara{Example (Win-Move)}
Consider the win-move game~\citep{vanGelder1993wfs}:
\begin{lstlisting}[style=logica,basicstyle=\small\ttfamily,columns=fullflexible]
    Win(x) :- Move(x,y), not Lose(y);
    Lose(x) :- Position(x), not Win(x);
\end{lstlisting}
Under Logica's DO semantics, derivation alternates between recomputing $\mathit{Win}$ and $\mathit{Lose}$ (each recomputation is idempotent). Since the derivation graph is finite, every infinite derivation enters a terminal SCC. Atoms (e.g., $\mathit{Win}(x)$) present in every state of every terminal SCC correspond to WFS-true atoms; atoms in none correspond to WFS-false; atoms that oscillate correspond to WFS-undefined.
\begin{theorem}\label{thm:wfs}
Let $P$ be a logic program and $(H, A, W_{\mathrm{gr}})$
its ground nonmonotonic system.
Let $t$ be a ground atom.
\begin{itemize}
\item If $t$ is true under WFS, then $t$ is not an anti-theorem
      in $(H, A, W_{\mathrm{gr}})$.
\item If $t$ is false under WFS, then $t$ is not a theorem
      in $(H, A, W_{\mathrm{gr}})$.
\end{itemize}
\end{theorem}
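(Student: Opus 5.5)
The plan is to exhibit a reachable state from which the WFS-true atoms can never be lost and the WFS-false atoms can never be gained. Let $\mathit{True}$ and $\mathit{False}$ denote the sets of WFS-true and WFS-false ground atoms. Define the region
$$Q = \{\, T \subseteq H \mid \mathit{True} \subseteq T,\ T \cap \mathit{False} = \emptyset \,\}.$$
There are three steps: (i) $Q$ is closed under every rewrite rule in $W_{\mathrm{gr}}$; (ii) some state $T_\mathit{WFS} \in Q$ is reachable from $A$; (iii) the two claims of the theorem follow from (i) and (ii).

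For step (i), I will use the two standard structural properties of the well-founded model $(\mathit{True}, \mathit{False})$.
\begin{itemize}
\item Every $p \in \mathit{True}$ has a rule whose body is true in the model: its positive atoms lie in $\mathit{True}$ and its negated atoms lie in $\mathit{False}$.
\item $\mathit{False}$ is an unfounded set with respect to the model. So every rule with head $f \in \mathit{False}$ has either a positive body atom in $\mathit{False}$ or a negated atom in $\mathit{True}$.
\end{itemize}
Now take $T \in Q$ and a rule $w_{q}$. If $q \in \mathit{True}$, the supporting rule's body is satisfied in $T$, so $q$ stays in the result. If $q \in \mathit{False}$, every rule for $q$ is blocked in $T$, so $q$ is absent from the result. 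If $q$ is WFS-undefined, nothing relevant to $Q$ changes. Since $w_q$ leaves all other atoms unchanged, $w_q(T) \in Q$.

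For step (ii), I will follow the bottom-up construction of the well-founded model. Each WFS-true atom receives a rank, namely the stage of the alternating fixpoint (or of the $W_P$ iteration) at which it becomes true. Its supporting rule then has positive body atoms of smaller rank and negated atoms in $\mathit{False}$. Starting from $A$, I apply $w_p$ to the WFS-true atoms in order of increasing rank. Every state along this derivation is a subset of $\mathit{True}$, so it never contains a WFS-false atom. Consequently, at the moment $w_p$ fires, the positive body atoms are already present and the negated ones are absent, so $p$ is added and nothing is removed. The end state is $T_\mathit{WFS} = \mathit{True} \in Q$.

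This is the step I expect to be the main obstacle. Derivations are finite, so I need $\mathit{True}$ to be finite and reached at a finite stage. I will therefore assume a finite ground program, as in the paper's examples (e.g.\ win-move). For infinite programs I would first try to restrict $Q$ to the finite part of $\mathit{True}$ that supports $t$, but I do not expect that restriction to be closed in general.

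For step (iii), let $t \in \mathit{True}$. Every state reachable from $T_\mathit{WFS}$ lies in $Q$ by (i), so it contains $t$; that is, $T_\mathit{WFS} \Vdash \mabox\, t$. If $t$ were an anti-theorem, then from $T_\mathit{WFS}$ (reachable by (ii)) the Defendant could reach a state omitting $t$, which is a contradiction. Hence $t$ is not an anti-theorem. Symmetrically, for $f \in \mathit{False}$ we have $T_\mathit{WFS} \Vdash \mabox\,\neg f$. In Thesis Defense terms, the Opponent's winning first move against $\{T \mid f \in T\}$ is to go to $T_\mathit{WFS}$, so $f$ is not a theorem by Theorem~\ref{theorem:defence_game}.
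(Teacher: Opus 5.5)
Your proposal is correct and follows essentially the same route as the paper's sketch: reach $T_\mathit{WFS}$ by firing the rewrite rules of the WFS-true atoms in stage order, then show that no rule can remove a WFS-true atom (each is supported) or add a WFS-false atom (unfoundedness), so the region you call $Q$ is closed. Your explicit invariant $Q$ and strict rank (via the $W_P$ iteration) make precise what the paper's sketch leaves informal, and the paper's argument relies implicitly on the same finiteness assumption you flag.
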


\begin{proof}[Proof (Sketch)]
The alternating-fixpoint construction produces the WFS-true atoms as an increasing sequence $K_1 \subseteq K_2 \subseteq \cdots$: an atom enters $K_{i+1}$ only when it is derivable using, positively, atoms already established true, and, negatively, only atoms already established \emph{false} (atoms that can never become WFS-true). The Opponent's strategy is to fire the rewrite rule $w_{P(\vec{v})}$ for each WFS-true atom $P(\vec{v})$ in this order of first appearance (ties within a stage in any order). Negative conditions invoked at each stage refer only to already-falsified atoms, so the Opponent never relies on the absence of an atom that is in fact WFS-true. This reaches a state $T_\mathit{WFS}$ containing exactly the WFS-true atoms. From $T_\mathit{WFS}$ no rule application can introduce a WFS-false atom (its body would require a WFS-true atom to be absent or a WFS-false atom to be present, neither of which holds), and every WFS-true atom is self-supporting; hence every reachable state retains all WFS-true atoms and excludes all WFS-false atoms. Therefore no WFS-true atom is an anti-theorem and no WFS-false atom is a theorem.
\end{proof}

Theorem~\ref{thm:wfs} does \emph{not} claim WFS-true atoms are DO theorems. The relationship is compatibility: the two never contradict, but each may be definite where the other is~not.


\subsection{Stable Model Semantics}

SMS~\citep{gel1988stable} defines a stable model $M$ as the least model of the reduct $P^M$ and forms the basis of Answer Set Programming~\citep{Gelfond08}.

\begin{theorem}\label{thm:stable}
Let $P$ be a logic program and $(H, A, W_{\mathrm{gr}})$
its ground nonmonotonic system.
Every stable model of $P$ is a static interpretation
of $(H, A, W_{\mathrm{gr}})$.
\end{theorem}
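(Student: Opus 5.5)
The plan is to show that a stable model $M$ is (i) a state of derivation of $(H, A, W_{\mathrm{gr}})$ and (ii) a fixpoint of every rewrite rule $w_{P(\vec{v})}$. Observation~\ref{obs:fixpoint-is-interpretation} then gives that $M$ is a static interpretation. Since the graph may be infinite, I will not go through Theorem~\ref{theorem:static-sync}. Instead I will argue directly: a fixpoint $T$ lies in every truth $N$. The reason is that $T$ is a reachable state, so by Definition~\ref{def:truth} there is a state reachable from $T$ that lies in $N$ and from which $N$ persists. But the only state reachable from a fixpoint is $T$ itself, so $T \in N$.

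\textbf{Fixpoint property.} Fix a ground atom $h$. By Definition~\ref{def:ground-do}, $w_h(M)$ contains $h$ iff some ground rule with head $h$ has its body satisfied in $M$, and it leaves every other atom unchanged. Stable models are supported models, so $h \in M$ iff some rule with head $h$ has its body true in $M$. Hence $w_h(M) = M$ for every $h$. Support follows from $M$ being the least model of $P^M$: that least model is the least fixpoint of $T_{P^M}$, so $M = T_{P^M}(M)$. A rule of $P^M$ with head $h$ fires in $M$ exactly when the corresponding rule of $P$ has its positive body in $M$ and its negative atoms outside $M$. Facts in $A$ are rules with empty body, so they are in $M$ and are preserved.

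\textbf{Reachability (main obstacle).} I must show $M$ is a derivation state, starting from $A$ and firing single-atom rules. I use the stratification of $M$ given by the iterates $K_0 = \emptyset$ and $K_{i+1} = T_{P^M}(K_i)$, whose union is $M$. Enumerate the atoms of $M$ in order of first appearance and fire $w_h$ for each in that order. The state then always stays inside $M$, since we only add atoms of $M$ and start from $A \subseteq M$. When $h$ is fired, a rule of $P^M$ with head $h$ has its positive body already in the current state. Its negative literals mention atoms outside $M$, and those are absent from the current state because the state is a subset of $M$. So the body is satisfied and $h$ is added. Firing $w_h$ never removes other atoms, so previously added atoms remain.

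For infinite $M$ this enumeration is infinite. I will address this by assuming, as the paper implicitly does for WFS and SMS, that the ground program is finite, so $M$ is finite and reached in finitely many steps. Otherwise I would note that derivations are finite by Definition~\ref{def:derivation-nm} and restrict the statement accordingly. Combining reachability with the fixpoint property and the argument in the first paragraph completes the proof.
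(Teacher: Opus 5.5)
Your proof is correct and follows the paper's route. Like the paper, you show that $M$ is fixed by every $w_{P(\vec{v})}$: evaluating rule bodies against $M$ is exactly evaluation under the reduct $P^M$, and $M = T_{P^M}(M)$; you then conclude via Observation~\ref{obs:fixpoint-is-interpretation}. Your reachability step, firing the atoms of $M$ in order of first appearance in the iterates of $T_{P^M}$, is a genuine addition that the paper's sketch silently skips, even though Definitions~\ref{def:fixpoint} and~\ref{def:static-interp} require $M$ to be a state of derivation. Your finiteness caveat is also warranted: each rewrite changes at most one atom, so if $M \setminus A$ is infinite, $M$ cannot be reached by finitely many single-atom rewrites.
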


\begin{proof}[Proof (Sketch)]
Let $M$ be a stable model of~$P$. We show that $M$ is a fixpoint of $(H, A, W_{\mathrm{gr}})$.
For any ground atom $P(\vec{v})$, the rewrite rule $w_{P(\vec{v})}$
evaluates all rules with head $P(\vec{v})$ against~$M$,
with $\mathit{not}\; Q(\vec{x})$ true iff $Q(\vec{x}) \notin M$.
This is exactly the evaluation under the reduct~$P^M$.
If $P(\vec{v}) \in M$, then since $M = \mathit{lfp}(T_{P^M})$,
some rule body is satisfied, so $w_{P(\vec{v})}$ retains $P(\vec{v})$.
If $P(\vec{v}) \notin M$, then no rule body is satisfied,
so $w_{P(\vec{v})}$ does not introduce $P(\vec{v})$.
In both cases $w_{P(\vec{v})}(M) = M$,
so $M$ is a fixpoint and therefore a static interpretation
by Observation~\ref{obs:fixpoint-is-interpretation}.
\end{proof}
The converse does not hold, i.e., DO admits interpretations that are not stable models. For programs having stable models, DO agrees; but DO
also assigns meaning to programs where stable models are absent or
where convergent iteration matters (as in PageRank).

\section{Conclusion}

We introduced Defendant-Opponent (DO) semantics, a stabilization-based framework for nonmonotonic logic programs with unrestricted recursion and aggregation. DO-truth admits three equivalent characterizations: graph-theoretic (stabilization in the derivation graph), game-theoretic (the Thesis Defense game), and modal (the $\bdb$ condition in S4). DO semantics coincides with least fixpoint semantics for positive Datalog, is compatible with WFS and SMS, and extends to programs that converge without reaching a fixpoint. Through $\omega$-limit interpretations, it bridges discrete symbolic reasoning with continuous mathematics, giving formal meaning to programs that compute real-valued limits.


\bibliographystyle{plainnat}
\bibliography{refs}


\ifappendix
\clearpage
\appendix
\begingroup\centering
  {\large\bfseries Supplementary Material}\\[3pt]
  \emph{for ``Diamonds Are Forever: Stabilization Semantics for
  Unrestricted Aggregation and Recursion in Logica''}\par
\endgroup
\phantomsection\label{app:start}
\medskip

\noindent All numbered cross-references (e.g., Definition~\ref{def:ground-do},
Theorem~\ref{thm:wfs}) refer to the main paper.
\medskip

\setcounter{definition}{0}
\renewcommand{\thedefinition}{S\arabic{definition}}

\section{Formal Syntax of Logica Programs}
\label{app:syntax}

The following definitions describe the formal syntax of the
fragment of Logica used in this paper. This is a strict subset
of the full Logica language, which includes additional features
(named arguments, meta-programming, imports) for practical
convenience. The fragment defined here is sufficient to express
all programs discussed in this paper.

\begin{definition}[Numbers]
A \emph{number} is an integer or a real number.
An \emph{integer} is a finite sequence of decimal digits,
optionally preceded by a minus sign.
A \emph{real number} is an integer followed by a decimal point
and a finite sequence of decimal digits.
\end{definition}

\begin{definition}[Strings]
A \emph{string} is a finite sequence of characters enclosed
in double quotes.
\end{definition}

\begin{definition}[Literals]
A \emph{literal} is one of the following:
\begin{enumerate}
\item a number,
\item a string,
\item the symbol \texttt{nil},
\item a predicate name (an identifier beginning with a capital letter),
\item a list $[e_1, \ldots, e_n]$ where each $e_i$ is an expression.
\end{enumerate}
\end{definition}

\begin{definition}[Expressions]
An \emph{expression} is one of the following:
\begin{enumerate}
\item a literal,
\item a variable (an identifier beginning with a lowercase letter),
\item an arithmetic expression $e_1 \mathbin{\mathit{op}} e_2$
      where $e_1$ and $e_2$ are expressions and
      $\mathit{op} \in \{+, -, *, /, \texttt{\^{}}\}$,
\item a predicate call $P(e_1, \ldots, e_n)$ where $P$ is a
      predicate name and each $e_i$ is an expression,
\item an aggregating expression (Definition~\ref{def:agg-expr}).
\end{enumerate}
\end{definition}

\noindent
A small set of \emph{built-in predicates}---for example \texttt{Range},
\texttt{Sqrt}, \texttt{N}, \texttt{Degree}, and \texttt{ResetProb}---is assumed
primitive and may appear wherever a predicate call is allowed; for instance,
\texttt{Range(n)} evaluates to the list $[0, \ldots, n-1]$.

\begin{definition}[Aggregating Expressions]
\label{def:agg-expr}
An \emph{aggregating expression} has the form
$\mathit{Agg}\{e \mathbin{:\text{-}} \varphi\}$
where:
\begin{enumerate}
\item $\mathit{Agg}$ is an aggregation operator
      (e.g., \texttt{Sum}, \texttt{Count}, \texttt{Min},
      \texttt{Max}, \texttt{List}),
\item $e$ is an expression,
\item $\varphi$ is a proposition.
\end{enumerate}
\end{definition}

\begin{definition}[Propositions]
A \emph{proposition} is one of the following:
\begin{enumerate}
\item a predicate call $P(e_1, \ldots, e_n)$,
\item a comparison $e_1 \mathbin{\mathit{op}} e_2$ where $e_1$
      and $e_2$ are expressions and
      $\mathit{op} \in \{=, \neq, <, >, \leq, \geq\}$,
      where $=$ serves as both equality test and
      unification (as in Prolog),
\item a membership test $e \in E$ where $e$ is an expression
      and $E$ is an expression evaluating to a list,
\item a negation $\mathit{not}\;\varphi$ where $\varphi$ is a
      proposition,\footnote{In the actual Logica syntax,
      negation is written as \texttt{\~{}}.}
\item a conjunction $\varphi_1 \land \varphi_2$ where $\varphi_1$
      and $\varphi_2$ are propositions.
\end{enumerate}
\end{definition}

\begin{definition}[Rules]
A \emph{rule} has one of the following forms:
\begin{enumerate}
\item \emph{Relational:}
      $P(e_1, \ldots, e_n) \mathbin{:\text{-}} \varphi$
\item \emph{Functional:}
      $P(e_1, \ldots, e_n) = e \mathbin{:\text{-}} \varphi$
\item \emph{Aggregated functional:}
      $P(e_1, \ldots, e_n) \mathbin{\odot}= e \mathbin{:\text{-}} \varphi$
\end{enumerate}
where:
\begin{itemize}
\item $P$ is a predicate name,
\item $e_1, \ldots, e_n$ and $e$ are expressions,
\item $\varphi$ is a proposition
      (which may be omitted, denoting $\mathit{true}$),
\item $\odot$ is an aggregation operator.
\end{itemize}
\end{definition}

\begin{definition}[Logica Program]
A \emph{Logica program} is a finite set of rules.
All rules defining a predicate $P$ must share the same
\emph{signature}: the same number of arguments, either all
with or all without a functional value, and if aggregated,
all with the same aggregation operator~$\odot$.
Disjunction is expressed implicitly: multiple rules with
the same predicate in the head represent alternative
derivations.
\end{definition}

\pagebreak

\section{Glossary of Key Terms}
\label{app:glossary}

\begin{description}

\item[Formal system] $(H, A, R)$:
  a set of formulas $H$, axioms $A$, and inference rules $R$
  (Definition~\ref{def:formal-system}).

\item[Nonmonotonic formal system] $(H, A, W)$:
  a set of formulas $H$, axioms $A$, and rewrite rules $W$
  that may retract derived formulas (Definition~\ref{def:nmfs}).

\item[Derivation:]
  a sequence of states starting from axioms $A$,
  where each step applies a rewrite rule
  (Definition~\ref{def:derivation-nm}); equivalently, a path in the
  derivation graph. (Dramatic interpretations, by contrast, need not be
  edge-connected; see their entry below.)

\item[State of derivation:]
  any set of formulas reachable by some derivation
  (Definition~\ref{def:state}).

\item[Derivation graph:]
  directed graph whose nodes are states of derivation
  and edges correspond to rewrite rule applications
  (Definition~\ref{def:derivationgraph}).

\item[Thesis:]
  a set $M$ of states representing a claim about
  derivation behavior (Definition~\ref{def:thesis}).

\item[Truth:]
  a thesis $M$ is a truth (i.e., holds) if from any state
  one can reach a state in $M$ from which all further
  states remain in $M$ (Definition~\ref{def:truth}).

\item[Theorem:]
  a formula $t$ such that the thesis $\{T \mid t \in T\}$
  holds; equivalently, $A \Vdash \Box\Diamond\Box\, t$
  (Definition~\ref{def:theorem}, Observation~\ref{obs:bdb}).

\item[Anti-theorem:]
  a formula $t$ such that the thesis $\{T \mid t \notin T\}$
  holds; equivalently, $A \Vdash \bdb\,\neg t$. Its negation,
  ``not an anti-theorem,'' corresponds to the dual modality
  $\madiamond\mabox\madiamond\, t$ (Definition~\ref{def:theorem}).

\item[Definite:]
  a formula that is either a theorem or an anti-theorem
  (Definition~\ref{def:theorem}).

\item[Static interpretation:]
  a state of derivation consistent with every truth;
  in finite systems, precisely the nodes in terminal
  SCCs of the derivation graph
  (Definition~\ref{def:static-interp}, Theorem~\ref{theorem:static-sync}).

\item[Fixpoint:]
  a state unchanged by all rewrite rules;
  every fixpoint is a static interpretation
  (Definition~\ref{def:fixpoint}, Observation~\ref{obs:fixpoint-is-interpretation}).

\item[Dramatic interpretation:]
  an infinite sequence of states such that for each true
  thesis $M$, there exists a step after which all states
  in the sequence belong to $M$ (Definition~\ref{def:dramatic-interp}).

\item[$\omega$-limit interpretation:]
  a topological limit: a state whose every open
  neighborhood is a true thesis (Definition~\ref{def:omega-limit}).

\item[Thesis Defense game:]
  a three-turn game (Opponent--Defendant--Opponent);
  a thesis holds iff the Defendant has a winning strategy
  (Definition~\ref{def:thesis_defense}, Theorem~\ref{theorem:defence_game}).

\item[Ground nonmonotonic system] $(H, A, W_{\mathrm{gr}})$:
  instantiation with one rewrite rule per ground atom,
  used for connection to WFS and SMS (Definition~\ref{def:ground-do}).

\end{description}
\fi

\end{document}